\documentclass[lettersize,journal]{IEEEtran}

\usepackage[T1]{fontenc}
\usepackage{cite}
\usepackage{float}
\usepackage{amsmath,amssymb,amsfonts,amsthm}
\usepackage{mathtools}
\usepackage{algorithm}
\usepackage{algorithmic}
\usepackage{graphicx}
\usepackage{textcomp}
\usepackage{xcolor}
\usepackage{booktabs}
\usepackage{multirow}
\usepackage{array}
\usepackage{colortbl}
\usepackage[hyphens]{url}
\usepackage[colorlinks=true,linkcolor=blue,citecolor=blue,urlcolor=blue]{hyperref}
\usepackage{microtype}
\usepackage{xspace}
\usepackage{tikz}
\usetikzlibrary{positioning,arrows.meta,calc,shapes.geometric,fit,backgrounds}

\newtheorem{theorem}{Theorem}
\newtheorem{lemma}{Lemma}
\newtheorem{proposition}{Proposition}

\newtheorem{assumption}{Assumption}

\newcommand{\halomark}{HaloMark\xspace}
\newcommand{\Wmat}{W}
\newcommand{\Sigmacalib}{\Sigma_{\mathrm{calib}}}
\newcommand{\effrank}{\mathrm{eff\_rank}}

\begin{document}

\title{\halomark: A Spectral Threshold for Embedding-Vector Watermarking under C2PA}

\author{Tarun~Sharma%
\thanks{T. Sharma is an independent researcher
(e-mail: tarun.sharma@ieee.org).}%
\thanks{Reproducible code and data:
\url{https://github.com/tarun-ks/halomark}.}}

\maketitle

\begin{abstract}
Foundation-model embeddings are now a primary data asset, but the
content-provenance machinery built for images and audio does not
transfer to them. C2PA binds to an asset with a stable bit-level or
perceptual identity; embeddings have neither, since quantisation,
projection, fine-tuning, and windowed averaging reshape them in normal
use and break any fixed hash.

We present \halomark{}, a watermark for embedding vectors
cryptographically bound to a C2PA manifest. It composes four standard
primitives---a block-diagonal orthogonal rotation, public whitening, an
input-dependent LSH commitment, and a per-vector nonce---around one
protocol change: the producer signs the LSH commitment $c$ into the
C2PA sidecar, and the verifier reads $c$ from the manifest instead of
recomputing it. Recomputing is fragile under whitening, which flips the
commitment bucket on $62\%$ of inputs at $\cos = 0.96$; reading the
signed $c$ reduces the verifier's score to
$T = T_{\mathrm{null}} + \beta(\mathcal{A})\cdot\varepsilon$, so
security turns on a single scalar $\beta$, which we bound rigorously for
linear and non-adaptive attackers and characterise empirically for the
adaptive case.

We evaluate against an adversary holding polynomially many
clean/watermarked pairs under one key with full sidecar visibility,
across eight baselines and ten adaptive attackers including
denoising-autoencoder removal. The eleven encoders separate at an
empirical threshold $\effrank(\Sigma)/d \approx 0.19$: above it,
detection AUROC stays at $0.98$ or higher across every in-budget attack
on the three encoders we sweep in full, and at $0.965$ or higher under
single-seed DAE removal on the rest; below it every variant we tested
fails. Why the threshold is dimension-uniform is left open. Deployed as
a Qdrant admission filter, the verifier runs at $284\,\mu$s and $24$
bytes of sidecar per vector, validated end-to-end against three C2PA
reference-SDK bindings.
\end{abstract}

\begin{IEEEkeywords}
Watermarking, embeddings, content provenance, C2PA, locality-sensitive
hashing, known-plaintext attack, vector databases, information forensics.
\end{IEEEkeywords}

\section{Introduction}\label{sec:introduction}

Embeddings have become a primary data asset. Retrieval-augmented
generation, semantic search, recommendation, biometric identification,
and content provenance all depend on dense vectors produced by
foundation models, and those vectors are now treated as content in
their own right: uploaded to vector databases, shared across
organisational boundaries, sold as datasets, and ingested by
downstream models.

C2PA-style provenance binding does not extend to dense vectors.
Routine quantisation, projection, and fine-tuning reshape every bit,
so a cryptographic hash on the embedding rarely survives legitimate
use. We want a watermark on the embedding itself that
survives those routine transformations, leaves cosine-based
downstream tasks intact, is cryptographically bound to a C2PA
manifest, and verifies at microsecond scale. Existing embedding
watermarks address some of these properties individually but none
together, and none under a threat model that includes a
denoising-autoencoder attacker with polynomial $(x, x')$ training
pairs.

\subsection*{The protocol idea: publish the commit}

A key-derived embedding watermark wants a content-dependent
signature in which the per-block perturbation $s_i$ depends on a
commitment $c\!=\!\texttt{Commit}(K, x; \mathrm{nonce})$ over the
input. SEAL~\cite{arabi2025seal} uses this pattern for image
watermarks, with a verifier that recomputes $c$ from the perturbed
test input and re-derives the expected signatures.

\textbf{Recomputing $c$ at verification time is fragile under
whitening.} Whitening
$\Wmat\!=\!(\Sigmacalib\!+\!\lambda_{\mathrm{reg}} I)^{-1/2}$
amplifies perturbations along low-eigenvalue directions by up to
$1/\sqrt{\lambda_{\mathrm{reg}}}\!\approx\!100\times$. The
recompute-commit verifier therefore loses roughly $20$ percentage
points of TPR on small-perturbation attacks even when the watermark
signal survives. \S\ref{sec:dirOrac} reports the full bucket-flip
rate by attack class.

Our fix is to publish $c$ in the C2PA-signed sidecar. The producer
outputs $(x', n, c)$, the manifest signs all three fields, and the
verifier reads $c$ directly without recomputing it from $\hat{x}$.
Tampering with $c$, $n$, or $x'$ breaks the C2PA signature before
\texttt{Verify} runs. Figure~\ref{fig:overview} shows the protocol
layout.

\begin{figure*}[t]
\centering
\begin{tikzpicture}[
  >=Stealth,
  font=\small,
  every node/.style={font=\small},
  stage/.style={draw, rounded corners=2pt, minimum height=9mm,
               minimum width=22mm, align=center, inner sep=3pt,
               line width=0.5pt},
  prod/.style={stage, fill=blue!8,  draw=blue!50!black},
  ver/.style={stage,  fill=green!8, draw=green!40!black},
  side/.style={draw, rounded corners=3pt, fill=orange!15,
               draw=orange!70!black, line width=0.8pt,
               align=center, inner sep=5pt,
               minimum height=14mm, minimum width=48mm},
  arr/.style={->, semithick, >=Stealth, draw=black!70},
  publishedge/.style={->, line width=1.4pt, >=Stealth,
                       draw=red!75!black,
                       dash pattern=on 4pt off 2pt},
  pubdash/.style={->, dashed, line width=0.5pt, >=Stealth,
                   draw=orange!70!black},
  rowlabel/.style={font=\small\itshape, anchor=east, text=black!70},
]

\node[rowlabel] at (-0.4, 3.2)  {Producer};
\node[prod] (X)   at ( 1.0, 3.2) {clean $x$};
\node[prod] (PW)  at ( 4.0, 3.2) {whiten\\$W = \Sigma^{-1/2}$};
\node[prod] (PQ)  at ( 7.0, 3.2) {rotate $Q$\\(block-diag)};
\node[prod] (LSH) at (10.0, 3.2) {LSH-commit\\$c$};
\node[prod] (PRF) at (13.0, 3.2) {sign\\$\{s_i\} = \mathrm{PRF}(K, n, c)$};
\node[prod] (Add) at (16.0, 3.2) {add $\varepsilon\,s_i$\\$\to x'$};

\draw[arr] (X)   -- (PW);
\draw[arr] (PW)  -- (PQ);
\draw[arr] (PQ)  -- (LSH);
\draw[arr] (LSH) -- (PRF);
\draw[arr] (PRF) -- (Add);
\draw[arr] (PQ.south) to[bend right=18] (Add.south);

\node[side] (sidecar) at (8.5, 1.0)
   {\textbf{C2PA-signed sidecar}\\[1pt]
    $\bigl(\, x',\;\; n,\;\; \boxed{c}\,\bigr)$};

\draw[pubdash] (Add.south)  to[bend left=10]  (sidecar.east);
\draw[pubdash] (LSH.south)  -- node[right=1pt, font=\scriptsize, text=orange!70!black] {publish $c$}
                              (sidecar.north);

\node[rowlabel] at (-0.4, -1.4) {Verifier};
\node[ver] (Xh)   at ( 1.0, -1.4) {test $\hat x$};
\node[ver] (VW)   at ( 4.0, -1.4) {whiten\\$W$};
\node[ver] (VQ)   at ( 7.0, -1.4) {rotate $Q$};
\node[ver] (Der)  at (10.0, -1.4) {derive\\$\{s_i\} = \mathrm{PRF}(K, n, c)$};
\node[ver] (Sc)   at (13.0, -1.4) {score\\$T = \langle Q\hat x,\, s\rangle$};
\node[ver] (Tt)   at (16.0, -1.4) {$T \geq \tau$?};

\draw[arr] (Xh)  -- (VW);
\draw[arr] (VW)  -- (VQ);
\draw[arr] (VQ)  -- (Der);
\draw[arr] (Der) -- (Sc);
\draw[arr] (Sc)  -- (Tt);

\draw[publishedge] (sidecar.south)
   to[bend left=8]
   node[right=2pt, align=left, font=\small\bfseries,
        text=red!70!black, pos=0.55]
        {read $c$\\(no recompute)}
   (Der.north);

\end{tikzpicture}
\caption{Published-commit verifier protocol. The producer's LSH commit
$c$ travels with $(x', n)$ in the C2PA sidecar (centre). The verifier
reads $c$ from the manifest to derive $\{s_i\}$ instead of recomputing
$\texttt{Commit}(K, \hat{x})$. The red edge is the only protocol change
vs.\ a recompute-commit verifier; everything else is standard
C2PA-bound watermark verification. \S\ref{sec:dirOrac} measures the
fragility this change removes.}
\label{fig:overview}
\end{figure*}

\subsection*{Contributions}

The headline finding is empirical. Across 11 embedders and six
structurally distinct perturbative approaches we find an empirical
spectral threshold (\S\ref{sec:threshold}) near
$\effrank(\Sigma)/d \approx 0.19$: above it the construction defends
at AUROC $0.96$ or higher; below, every approach collapses to a
manifold-projection attacker trained on as few as 10 KPA pairs. The
threshold is dimension-uniform but not yet derived; why is left open.
Whether a perturbative embedding watermark exists at all on a given
encoder is decided by the encoder's spectrum, not by the watermark
design. We adopt the manifold-aligned variant as the default,
gaining 10 to 19 percentage points of $\beta$ over a random-$Q$
baseline on above-threshold encoders.

Three supporting contributions make the upper half of that story
attainable. First, the published-commit verifier protocol
(\S\ref{sec:core}) reduces the security argument to a single
signature-retention scalar via the linear decomposition of
Eq.~\eqref{eq:score-decomp} (Lemma~\ref{thm:retention}). Second, a closed-form Wiener bound
(Theorem~\ref{thm:wiener-floor}) characterises the optimal
adaptive-linear attacker through a Gaussian first-order term
$\beta_{\mathrm{char}}\!\approx\!\phi(\nu^*)/\sqrt{d\,\rho(\nu^*)}$
within about $13\%$ of the empirical Wiener attack, plus a
conservative Cauchy--Schwarz lower bound ($\beta\!\geq\!0.136$ on
MiniLM, well below the empirical floor of $0.851$ but
distribution-free); both are computable in $O(d)$ from the
encoder's eigenspectrum. Third, Mode C (\S\ref{sec:mode-c}) adds a
text-MinHash sketch to the C2PA manifest, rejecting $92\%$ of
multi-attempt paraphrase attacks while accepting $100\%$ of routine
perturbations, and so closes the paraphrase regime in RAG-style
deployments where source text accompanies the embedding.

Deployment-scale results (\S\ref{sec:baselines}): cross-corpus
validation, 64-key multi-tenant isolation up to a 32-key collusion
DAE, self-calibration (Mode B), 4-bit-rep4 multi-bit payload
decoding, Qdrant 1.17 admission-filter integration at $284\,\mu$s
per vector, and end-to-end interop with three independent language
bindings of the C2PA reference SDK.

\paragraph*{Threat model}
The C4 attacker holds polynomial-many $(x, x', n, c)$ tuples under
one key plus full sidecar visibility, with no verifier-oracle access.
We exercise a 14-row standard attack sweep together with a series of
strong DAE variants: training on up to $10^6$ pairs, $4\times$
width, four residual layers, FiLM conditioning, a $65{,}536$-bucket
gating head, spectrally normalised denoisers with total Lipschitz
constant in $[1.5, 300]$, natively budget-constrained DAEs, and a
within-C4 surrogate-PGD attacker. Above the threshold, AUROC stays
at $0.99$ or better across every in-budget attacker we tried; below it even a
10-pair DAE defeats the watermark.

\section{Background and Threat Model}\label{sec:threat}

\subsection{Embeddings as protected content}\label{sec:embeddings-as-content}
Dense neural embeddings are vectors in $\mathbb{R}^d$, typically with $d \in [384, 4096]$, produced by a foundation model on input content. The relevant invariants are: cosine similarity between related content should be high; cosine similarity between unrelated content should be low; the specific coordinate values are not semantically meaningful, only their geometric relationships are. This last property --- \emph{rotation invariance of utility} --- is exactly what \halomark{} exploits: we can rotate the embedding into a secret basis, perturb selected coordinates, and rotate back, without disturbing any downstream task that depends only on cosine similarity between embeddings.

In a deployed RAG or semantic-search system, embeddings are produced once and then ingested into a vector database, where they are queried by cosine similarity. The vector is the asset; the manifest must travel with it for provenance to remain attached. \halomark{} provides exactly this: a watermark embedded into the vector itself, cryptographically bound to a C2PA manifest such that neither the vector nor the manifest can be independently substituted without breaking verification.

\subsection{C2PA manifests}\label{sec:c2pa-bg}
The C2PA specification~\cite{c2pa2024} defines a signed manifest containing a sequence of assertions about a piece of content --- origin, processing history, cryptographic hashes, and author signatures. Manifests are designed for media. Two properties matter for \halomark: (a) the manifest contains a deterministic claim hash that can be hashed into a key, and (b) the manifest signature allows a verifier to authenticate manifest content without trusting the transport. \halomark{} treats the manifest as an external oracle and does not modify the C2PA specification.

\subsection{Threat model: Multi-Party Adversary (MPA)}\label{sec:mpa}
We consider a pipeline with three parties: an \emph{embedding producer}~$E$ (typically a model provider) who generates watermarked embeddings; a \emph{vector-store operator}~$V$ who indexes and serves embeddings; and a \emph{consumer}~$C$ who retrieves embeddings and uses them downstream. \halomark's threat model assumes that any one of these parties may be adversarial, but the C2PA-signed manifest's signing key is not compromised.

Formally, an MPA adversary~$\mathcal{A}$ has the following capabilities:

\begin{itemize}
\item \textbf{C1 (Observation).} $\mathcal{A}$ sees any watermarked embedding $x'$ and the corresponding C2PA manifest~$m$.

\item \textbf{C2 (Routine post-processing).} $\mathcal{A}$ may apply dimensionality reduction (PCA, random projection), quantisation (int8, int4, 1-bit), normalisation, clipping, and convex combination (averaging).

\item \textbf{C3 (Fine-tuning drift).} $\mathcal{A}$ may ingest $x'$ into a downstream model and produce a perturbed $x''$ after gradient updates.

\item \textbf{C4 (Targeted removal under known-plaintext access \emph{and} a semantic-preservation budget).}
$\mathcal{A}$ computes $\tilde{x}$ such that $\cos(\tilde{x}, x_{\mathrm{clean}}) \geq \delta_{\mathrm{C4}}$ \emph{and} $\mathrm{Verify}(K, \tilde{x}, \tau_{\mathrm{det}})$ rejects, where $x_{\mathrm{clean}}$ is the original unwatermarked embedding and $\delta_{\mathrm{C4}}$ is a deployment-driven cosine threshold (canonical operational minimum: $\delta_{\mathrm{C4}}{=}0.95$).

\emph{Why $\delta_{\mathrm{C4}}{=}0.95$.}
$\delta_{\mathrm{C4}}$ is a single deployer-chosen value; attacks past
whichever value is chosen count as destroying the asset for that
deployment. $\delta_{\mathrm{C4}}{=}0.95$ is conservative for
biometric, deduplication, and near-duplicate detection
pipelines~\cite{karpukhin2020dpr,reimers2019sbert}; broad ranked
retrieval (MS~MARCO MRR@10) admits a looser
$\delta_{\mathrm{C4}}{=}0.85$ (the MRR@10 deltas at
$\cos\!\in\!\{0.95, 0.85, 0.80\}$ are in \S\ref{sec:imperc}). We
report headline numbers at $\delta_{\mathrm{C4}}{=}0.95$ and mark
rows admitted only at the relaxed $0.85$ budget (Table~\ref{tab:rob},
$\ddagger$). The L2-distance equivalent on unit-normalised embeddings
is $\sqrt{2(1-\delta_{\mathrm{C4}})}$.

\emph{Adversary access.}
$\mathcal{A}$ has access to polynomial-many tuples $(x_i, x'_i, \mathrm{nonce}_i, c_i)$ under the same key --- arbitrarily many clean/watermarked pairs \emph{together with} the producer-published per-vector nonce and commitment from the manifest sidecar (\S\ref{sec:core}) --- and full knowledge of the \halomark{} algorithm. $\mathcal{A}$ does \emph{not} have access to the key~$K$ itself, nor to a \texttt{Verify} oracle (excluded in this section, reinforced in \S\ref{sec:disc}). The published $(\mathrm{nonce}_i, c_i)$ are \emph{strictly more} information than a recompute-commit construction would give $\mathcal{A}$; we still defend (\S\ref{sec:security}) under this stronger model.

\emph{Outcome accounting.}
Attacks that achieve \texttt{Verify} rejection by perturbing $x'$ past the budget ($\cos(\tilde{x}, x_{\mathrm{clean}}) < \delta_{\mathrm{C4}}$) destroy the asset rather than strip the watermark and so are not counted as successful removal --- utility loss is itself the deployment-side defence. C4 directly mirrors the known-plaintext-style forgery and removal attacks shown by M\"{u}ller~et~al.~\cite{muller2025blackbox} and follow-up work~\cite{dong2025wmcopier,zhu2025pnp} to break content-agnostic image and diffusion-model watermarks, extended to embeddings and to the per-output-randomness setting of Bileve~\cite{zhou2024bileve}. \S\ref{sec:dirOrac} reproduces the analogous attack against a content-agnostic ablation of \halomark{} and motivates the full construction of \S\ref{sec:core}.

\item \textbf{C5 (Forgery).} $\mathcal{A}$ attempts to produce a fresh embedding $z$ and manifest $m'$ such that \halomark{} verification accepts $(z, m')$.

\item \textbf{C6 (Collusion).} $\mathcal{A}$ observes many $(x', m)$ pairs and attempts to extract the master key or block structure.
\end{itemize}

We formalise the game-based definitions in Appendix~\mbox{S-I}. Outside the MPA model:
\begin{itemize}
\item We do not address attacks that compromise the C2PA signing key itself.
\item We do not address adversaries who control the embedding model's training data at origin.
\item We assume the calibration-corpus registry (which serves $\Sigmacalib$ keyed by \texttt{calibration\_corpus\_id}) is trusted at the same level as a C2PA-CA. A compromised registry could publish a low-rank backdoored $\Sigma$ that aligns the watermark with attacker-knowable directions; we recommend either pinning the SHA-256 of $\Sigma$ in the producer's manifest or recomputing $\Sigma$ locally from a publicly-available corpus. We discuss this in \S\ref{sec:c2pa} and \S\ref{sec:disc}.
\item We do not address adaptive adversaries with access to a \texttt{Verify} oracle under the target key. Such an adversary can in principle learn the watermark subspace through staircase attacks; \halomark{} assumes the verifier exposes \emph{no} binary-decision interface to untrusted parties. We discuss deployment hardening in \S\ref{sec:disc}.
\item We do not defend against passive side-channel leakage (timing, cache) from the detection oracle.
\end{itemize}

These are orthogonal security properties better handled by C2PA, by standard key-management, and by deployment-side hardening, respectively.

\subsection{Security goals}\label{sec:goals}
\halomark{} targets three properties, stated informally:

\begin{itemize}
\item \textbf{Imperceptibility.} A watermarked embedding $x'$ is indistinguishable from an unwatermarked $x$ from the perspective of any downstream cosine-similarity-based task, up to an adjustable parameter $\varepsilon$ (\S\ref{sec:imperc}).
\item \textbf{Robustness.} For all $\mathcal{A}$-attacks in C2--C4, detection AUROC $\geq \tau_{\mathrm{robust}}$ with $\tau_{\mathrm{robust}} \geq 0.85$ on every C4-valid operating point (\S\ref{sec:rob}).
\item \textbf{Unforgeability.} For all $\mathcal{A}$ attempts at C5, verification success probability is at most $2^{-\lambda}$ for security parameter $\lambda$ (formalised as Thm.~S1, App.~\mbox{S-I}).
\end{itemize}

Formal definitions appear in Appendix~\mbox{S-I}.

\section{Related Work}\label{sec:related}

\subsection{Positioning relative to prior watermarks}\label{sec:related-table}
\halomark{} differs from prior watermarks on four axes: KPA-resist
(defends polynomial $(x, x')$-pair attackers), per-vector freshness,
content-dependence, and C2PA-binding. Existing EaaS embedding
watermarks (EmbMarker~\cite{peng2023embmarker},
WARDEN~\cite{shetty2024warden}, WET~\cite{shetty2025wet},
ESpeW~\cite{wang2025espew}) have none of these (they target paraphrase/
clustering removal, not KPA). SEAL~\cite{arabi2025seal} adds
content-dependence for images but recomputes the commit at verify time;
Bileve~\cite{zhou2024bileve} adds per-token freshness for LMs;
MetaSeal~\cite{zhou2025metaseal} adds C2PA-binding for images. Only
\halomark{} satisfies all four for embedding vectors --- the
published-commit step (\S\ref{sec:core}) is what makes the all-yes row
tractable under whitening on retrieval encoders.


\subsection{Embedding watermarking}\label{sec:related-emb}
Foundational deep-NN watermarking starts with
Uchida~et~al.~\cite{uchida2017embedding} (in-weight payloads) and
Adi~et~al.~\cite{adi2018turning} (backdoor-based ownership claims),
systematised in Lukas~et~al.~\cite{lukas2022sok}.
Saberi~et~al.~\cite{saberi2024robustness} raise the bar for what counts
as a strong adversary. The closest prior art is the
\emph{Embeddings-as-a-Service (EaaS) watermarking} line, which protects
copyright over dense embeddings against imitation and removal:
\textbf{EmbMarker}~\cite{peng2023embmarker} (trigger-based, content-agnostic
direction); \textbf{WARDEN}~\cite{shetty2024warden} (multi-orthogonal
spread for clustering-removal resistance); \textbf{ESpeW}~\cite{wang2025espew}
(sparse content-dependent mask); \textbf{WET}~\cite{shetty2025wet}
(linear-transformation pattern). \emph{None target a polynomial
known-plaintext adversary.} Two 2025 semantic EaaS watermarks postdate
this design: \textbf{RegionMarker}~\cite{yang2025regionmarker} triggers
watermarks on PCA-defined regions, and \textbf{SemMark}~\cite{li2025semmark}
uses LSH to partition the semantic space and inject region-local
signatures; both defend model-extraction copyright rather than
per-vector provenance, and neither admits a polynomial KPA adversary.
A concurrent geometry-aware localized scheme~\cite{chen2026geomark}
likewise places the mark by data geometry but stays within the same
EaaS-copyright model.
SemMark's LSH partitioning is the closest prior use of locality-sensitive
hashing, but it hashes to \emph{place} the watermark, whereas we publish
the commitment $c$ in the C2PA sidecar so the verifier never recomputes
it. Our \texttt{WET-approx} re-implementation
collapses to AUROC $\approx 0.50$ under direction-oracle (\S\ref{sec:baselines}),
empirically confirming the structural argument. Faithful ports of
EmbMarker and WARDEN reduce cell-by-cell to \texttt{AddDir} and
\texttt{WET-approx} on our threat model (Table~\ref{tab:baselines}).

Adjacent constructions reused here:
\textbf{SEAL}~\cite{arabi2025seal}'s LSH-commit pattern (originally for
images); \textbf{Bileve}~\cite{zhou2024bileve}'s per-output randomness
for LM token watermarks; \textbf{MetaSeal}~\cite{zhou2025metaseal}'s
C2PA-binding pattern. The \emph{published-commit} step (\S\ref{sec:core})
is the new piece: SEAL recomputes $c$ at the verifier, which is
fragile under whitening (\S\ref{sec:dirOrac}); we publish $c$ in the
C2PA sidecar so the verifier never recomputes it. Token-stream text watermarks
(\textbf{KGW}~\cite{kirchenbauer2023kgw},
Aaronson~\cite{aaronson2022scott},
Christ~et~al.~\cite{christ2024undetectable}) and pixel-space image
watermarks (HiDDeN~\cite{zhu2018hidden}, RoSteALS~\cite{bui2023rosteals})
live in different signal domains; we cite Tree-Ring~\cite{wen2023treering}
and Stable Signature~\cite{fernandez2023stablesig} as the
diffusion-output reference. We measure mutual non-interference with
KGW on $500$ MS-MARCO passages (\S\ref{sec:baselines}), consistent
with the expected orthogonal-composition regime.

\paragraph*{Forgery and known-plaintext attacks on watermarks}
The C4 threat model with explicit polynomial $(x, x')$ access plus
C2PA-sidecar visibility responds to recent attacks on content-agnostic
watermarks: M\"{u}ller~et~al.~\cite{muller2025blackbox} demonstrate
black-box semantic-watermark forgery from a single observed pair;
Dong~et~al.~\cite{dong2025wmcopier} and Zhu~et~al.~\cite{zhu2025pnp}
extend with diffusion-regeneration attacks. \S\ref{sec:dirOrac} reproduces
the analogous attack against a content-agnostic ablation of \halomark{}.

\paragraph*{Position relative to prior art}
The contribution is not the individual mechanisms but
\emph{the composition that closes a measurable attack class on embeddings
under a clean security analysis}. Two third-party-pattern baselines from
the EaaS line port the published structural choices to our threat
model: \texttt{SEAL-only} and \texttt{ESpeW-port}. Both attain baseline
AUROC $\approx 0.71$ and survive direction-oracle (content-dependence
is sufficient for the single-pair attack) but collapse to AUROC
$\approx 0.55$ under DAE-$10^5$ (no per-vector nonce). The full
\halomark{} composition closes the gap to AUROC $\geq 0.997$.
\S\ref{sec:related-table} makes this concrete on four axes (KPA-resist,
per-vector freshness, content-dependence, C2PA-binding).

\subsection{Adjacent primitives we adapt}\label{sec:related-adapt}
Three components are adapted from prior work and one combination is new.
Tikhonov-regularised whitening, established in
sentence-embedding work~\cite{su2021bertwhitening,li2020bertflow,gao2022kernelwhitening},
is used here as a \emph{security} primitive (manifold-projection
defence, \S\ref{sec:core}), not for retrieval quality. Per-vector
freshness adapts Bileve's~\cite{zhou2024bileve} per-output-randomness
pattern from token-stream watermarks to embedding vectors with a
public nonce per vector (PRF reduction in
Appendix~\mbox{S-I-D}). Content-dependent commitments
adapt SEAL's~\cite{arabi2025seal} LSH-commit pattern to the
non-watermark subspace of $Qx$; the \emph{producer-published commit}
step is, to our knowledge, new. The block-diagonal orthogonal
rotation, which appears in adjacent
contexts~\cite{zhu2024bdorme,tastan2024sparsebd}, is repurposed as a
key-indexed partition into signal-carrying and signal-free blocks.
Vector-database payload-filtering access
control~\cite{yonathan2025accesscontrol} is complementary: it
controls who retrieves a vector, while HaloMark controls what the
vector \emph{claims about itself} after retrieval.

\section{The \halomark{} Construction}\label{sec:construction}

\begin{table}[t]
\centering
\caption{Notation. The retention scalar $\beta$ is always subscripted by
which quantity is meant ($\beta_{\mathrm{char}}$, $\beta_{\mathrm{emp}}$,
$\beta_{\min}$); bare $\beta(\mathcal{A})$ denotes the retention of a
specific attacker $\mathcal{A}$. Note $\Wmat$ (whitening matrix) and the
block subset $W$ are distinct objects that share a letter.}
\label{tab:notation}
\footnotesize
\setlength{\tabcolsep}{4pt}
\begin{tabular}{l l}
\toprule
symbol & meaning \\
\midrule
\multicolumn{2}{l}{\emph{Encoder geometry}}\\
$d$ & embedding dimension \\
$N,\,b$ & blocks and block size ($b{=}d/N$) \\
$w$ & number of watermark-carrying blocks \\
$\Sigma_{\mathrm{calib}},\,\mu$ & calibration covariance and mean \\
$\lambda_i$ & eigenvalues of $\Sigma_{\mathrm{calib}}$ \\
$\effrank(\Sigma)/d$ & spectral threshold statistic ($\approx 0.19$) \\
$\kappa_{\mathrm{eff}}$ & effective condition number $\lambda_{\max}/\lambda_{1\%}$ \\
$\Wmat$ & whitening matrix $(\Sigma_{\mathrm{calib}}{+}\lambda_{\mathrm{reg}}I)^{-1/2}$ \\
$\lambda_{\mathrm{reg}}$ & whitening regulariser (default $10^{-4}$) \\
$Q$ & block-diagonal orthogonal rotation \\
\midrule
\multicolumn{2}{l}{\emph{Protocol and crypto}}\\
$K,\,n,\,c$ & content key, per-vector nonce, published commit \\
$W$ (set) & $w$-element watermark block subset (\emph{not} $\Wmat$) \\
$s_i,\,\varepsilon$ & per-block signature and watermark amplitude \\
$k,\,B,\,P,\,\tau$ & LSH dimension, buckets, projection, cutoffs \\
\midrule
\multicolumn{2}{l}{\emph{Detection}}\\
$T,\,T_{\mathrm{null}}$ & verifier score and its $H_0$ component \\
$\tau_{\mathrm{det}},\,\alpha$ & detection threshold, operating FPR \\
$\zeta$ & sub-Gaussian slack (Assumption~\ref{assum:hd}) \\
\midrule
\multicolumn{2}{l}{\emph{Attack model}}\\
$\mathcal{A},\,q$ & attacker, verifier-oracle query budget \\
$\delta_{C4}$ & cosine-fidelity budget (default $0.95$) \\
$\beta(\mathcal{A})$ & signature retention $\in[0,1]$ (load-bearing scalar) \\
$\beta_{\mathrm{char}}$ & Wiener Gaussian characterisation \\
$\beta_{\mathrm{emp}}$ & measured Wiener-attack retention ($0.752$) \\
$\beta_{\min}$ & empirical strict-C4 floor ($0.94$) \\
$R^*(\nu),\,\nu^*$ & Wiener filter and its cosine-pinned parameter \\
\bottomrule
\end{tabular}
\end{table}

\subsection{Core scheme}\label{sec:core}
The protocol publishes the per-vector LSH commitment in the C2PA
sidecar so that the verifier reads it rather than recomputing it
from the perturbed input. Around that decision the construction
composes four standard primitives: a block-diagonal orthogonal
rotation, public-$\Sigma$ whitening, a SEAL-style content-dependent
LSH commitment, and a per-vector public nonce in the spirit of
Bileve. Each component is familiar individually, but their
composition with the published commit is what closes the attack class
we evaluate.

Let $d$ be the embedding dimension partitioned into $N$ equal-size
blocks of size $b = d/N$. HaloMark consists of four routines:
\texttt{KeyGen}, \texttt{Commit}, \texttt{Embed}, and \texttt{Verify}.

\paragraph*{\texttt{KeyGen}$(\mathrm{pp}, m)$}
$K \!\leftarrow\! \mathrm{HKDF}(\mathrm{claim\_hash}(m),
\texttt{"HaloMark"})$. Sample $Q \in O(d)$ block-diagonal from
$\mathrm{PRF}_K(\texttt{"Q"})$ via Householder~\cite{stewart1980orthsampling};
sample $P \in \mathbb{R}^{k \times (N-w)b}$ from
$\mathrm{PRF}_K(\texttt{"commit\_proj"})$ with unit-$L_2$ rows; set
$\tau$ to the $B{-}1$ equi-probable quantile cutoffs of
$\mathcal{N}(0, 1/((N-w)b))$.

\paragraph*{\texttt{Commit}$(K, x; n)$}
$W$ is a $w$-element subset of $\{1,\ldots,N\}$ drawn from
$\mathrm{PRF}_K(\texttt{"blocks"}, n)$;
$y=\Wmat(x-\mu)$, $u=y/\|y\|$;
$v=(Qu)$ on the $(N\!-\!w)$ non-watermark blocks. The scale-invariant
LSH on the unit direction of $v$ is
\begin{equation}\label{eq:commit}
c_j \;=\; \#\{\,t\in\tau : (P\bigl(v/\|v\|\bigr))_j \geq t\,\},
\qquad j=1,\ldots,k,
\end{equation}
and $\texttt{Commit}$ returns $(c, W)$.

\paragraph*{Signatures}
For each block $i\in W$ the signature is
\begin{equation}\label{eq:sig-def}
s_i(c, n, K) \;=\; \varepsilon \cdot g_i \,/\, \|g_i\|,
\end{equation}
where $g_i\in\mathbb{R}^b$ is drawn from the PRG seeded by
$\mathrm{HKDF}_K(\texttt{"sig"}\,\|\,i\,\|\,c\,\|\,n)$.

\noindent\textit{\texttt{Embed}$(K, x)\!\to\!(x', n, c)$:}
{\footnotesize\begin{enumerate}\setlength\itemsep{0pt}
\item Sample fresh nonce $n \xleftarrow{\$} \{0,1\}^{\lambda_n}$.
\item $(c, W) \leftarrow \texttt{Commit}(K, x; n)$.
\item $y \leftarrow \Wmat(x - \mu)$; $u \leftarrow y / \|y\|$;
$u' \leftarrow Qu$.
\item For each $i \in W$: $u'_i \leftarrow u'_i + s_i(c, n, K)$.
\item $x' \leftarrow \Wmat^{-1} (Q^\top u') \cdot \|y\| + \mu$.
\item Output $(x', n, c)$ to the manifest sidecar.
\end{enumerate}}

\noindent\textit{\texttt{Verify}$(K, \hat{x}, n, c, \tau_{\mathrm{det}})$:}
{\footnotesize\begin{enumerate}\setlength\itemsep{0pt}
\item Recover $W$ from $\mathrm{PRF}_K(\texttt{"blocks"}, n)$.
\item $u \leftarrow \Wmat(\hat{x} - \mu)/\|\!\cdot\!\|$.
\item $T \leftarrow \frac{1}{w}\sum_{i\in W}
\langle (Qu)_i, s_i(c, n, K)\rangle/\|s_i\|$.
\item Return $\mathtt{accept}$ iff $T \geq \tau_{\mathrm{det}}$.
\end{enumerate}}

\noindent\emph{Critical:} step~3 uses the producer's published $c$ from
the manifest. The verifier does \textbf{not} recompute
$\texttt{Commit}(K, \hat{x})$. This is the protocol move that grounds
Lemma~\ref{thm:retention}.

\paragraph*{Why publish $c$?}
The natural alternative is for the verifier to compute $c$ from
$\hat{x}$ itself. An earlier draft of the construction took this
route, but we abandoned it because whitening amplifies perturbations
in $\Wmat$'s low-eigenvalue directions by up to
$1/\sqrt{\lambda_{\mathrm{reg}}} \approx 100\times$ at our defaults.
A benign attack with $\|\tilde{x}-x'\|=D$ in the original space can
move the LSH input far enough to flip a large fraction of bucket
assignments; the resulting wrong signatures collapse the score even
when the underlying watermark survives. \S\ref{sec:dirOrac} reports
the full bucket-flip rates by attack class. Publishing $c$ in the
manifest sidecar breaks this dependence: detection now follows signature
alignment in $(Qu)_W$ rather than the stability of the LSH bucket
under perturbation. The C2PA binding is preserved because $c$
travels alongside $n$ and $x'$ under the producer's manifest
signature, and tampering with any of the three breaks that
signature before \texttt{Verify} runs.

Per watermarked vector the producer publishes the tuple
$(x', n, c)$; per-vector sidecar overhead and the resulting C2PA
manifest layout are detailed in \S\ref{sec:c2pa}.

\paragraph*{Component roles under attack}
Each of the four primitives is load-bearing: removing any one reduces
the construction to a known failure mode. We give the per-component
ablation in \S\ref{sec:security} (``Each component is necessary'')
against the baseline ports of Table~\ref{tab:baselines}.

\subsection{The direction-oracle attack on content-agnostic constructions}\label{sec:dirOrac}
\paragraph*{Direction-oracle: the structural attack content-agnostic constructions cannot survive}
Any embedding watermark whose signature $s_i$ does not depend on the input $x$ admits a \emph{single-pair} known-plaintext attack analogous to the forgery and removal attacks recently demonstrated against image and diffusion-model watermarks by M\"{u}ller~et~al.~\cite{muller2025blackbox} and follow-up work~\cite{dong2025wmcopier,zhu2025pnp}: an adversary with one pair $(x_0, x_0')$ under key $K$ recovers the constant perturbation $\delta_K = x_0' - x_0$, then for any target $x'$ outputs $\tilde{x} = x' - \delta_K$, defeating \texttt{Verify} with score collapsing to the null distribution. AUROC drops to $0.500$ on the very first known pair, regardless of $n_{\mathrm{pairs}}$. We verify this empirically on a content-agnostic ablation of our own construction (\S\ref{sec:baselines}, the ``content-agnostic ablation'' row of Table~\ref{tab:baselines}): baseline AUROC $= 1.000$, post-attack AUROC $= 0.500 \pm 0.0002$ across $10$ seeds and $n_{\mathrm{pairs}} \in \{1, 10, \ldots, 10^5\}$.

\paragraph*{Why a content-dependent commitment is the unique fix}
Defeating the direction-oracle attack requires that $\delta_K(x)$ vary with $x$ in a way the attacker cannot extrapolate. Two candidate fixes:
\begin{itemize}
\item \emph{Per-input random nonce alone} (no commitment): the nonce is producer-published in the manifest, so the attacker reads it; an attacker with $n_{\mathrm{pairs}}$ pairs trains a regression to predict $\delta(x, \mathrm{nonce})$. Empirically broken at $n_{\mathrm{pairs}} = 10^5$ by a denoising autoencoder.
\item \emph{Content-dependent commitment} (the \halomark{} choice): $s_i$ is a function of $\texttt{Commit}(K, x; \mathrm{nonce})$, an LSH bucket on the non-watermark subspace of $Qx$. Two different test inputs with $\cos(x_a, x_b) < 0.95$ land in different commit buckets with high probability, so the attacker's recovered $\delta$ at $(x_0, \mathrm{nonce}_0)$ does not transfer; the attack reduces to a per-bucket recovery, requiring $n_{\mathrm{pairs}} \cdot B^k = n_{\mathrm{pairs}} \cdot 2^{16}$ effective pairs to mount on any single bucket.
\end{itemize}
Empirically, \halomark{} post-attack AUROC remains $\geq 0.9999$ up to $n_{\mathrm{pairs}} = 10^5$ (\S\ref{sec:rob}, Table~\ref{tab:rob}). The commitment is the unique component among these candidates that simultaneously preserves C5 imperceptibility ($\cos$ change at the producer level is identical) and breaks the single-pair extrapolation.

\paragraph*{The whitening~$\times$~LSH commit-flip measurement}
We measure the bucket-flip rate of the recompute-commit verifier on representative attacks (5 seeds, default parameters; per-attack JSON in the released artefact). Even on attacks the underlying signature survives, the rate is substantial: $5\%$ on quant-int8 ($\cos \approx 0.99$), $62\%$ on quant-int4 ($\cos \approx 0.96$), $47\%$ on noise $\sigma{=}0.01$, $66\%$ on drift-100ep, $90\%$ on noise $\sigma{=}0.05$, $\geq 99\%$ on the PCA / random-projection / strong-noise rows. The recompute-commit verifier therefore derives wrong signatures on the majority of attacked inputs and its score collapses; publishing $c$ removes the fragility (the verifier never recomputes), giving the linear $\beta \cdot \varepsilon$ score decomposition of Lemma~\ref{thm:retention}.

\subsection{Manifold-aligned variant (production default)}\label{sec:manifold}
The manifold-aligned variant replaces the random block-diagonal
rotation $Q$ of \S\ref{sec:core} with a rotation that places the
watermark blocks in the top-$wb$ eigendirections of
$\Sigma_{\mathrm{calib}}$.

\paragraph*{Motivation}
A DAE attacker learns an approximation of the data manifold
$\mathcal{M}$ from polynomial KPA pairs. Any watermark direction
that lies outside $\mathcal{M}$ is removed by the manifold projection
$\Pi_{\mathcal{M}}$ at little cost to the cosine constraint, since
the projection moves the input back toward the manifold along an
off-manifold direction. Aligning the watermark with the top
eigendirections of $\Sigma$ places it inside $\mathcal{M}$, so
$\Pi_{\mathcal{M}}$ cannot remove the watermark without also
distorting the data signal.

\paragraph*{Construction}
Write $\Sigma = U \Lambda U^\top$ with eigenvalues sorted in
descending order, and let $U_{\mathrm{top}} \in \mathbb{R}^{d \times wb}$
be the matrix whose columns are the top $wb$ eigenvectors. The only
change from \S\ref{sec:core} is the embedding step. We replace
$\Wmat^{-1} Q^\top u'$ with $x + U_{\mathrm{top}}\,\eta$, where
$\eta \in \mathbb{R}^{wb}$ concatenates the $w$ per-block PRF
signatures, and then renormalise to unit $L_2$ norm. To verify, the
recipient projects $\hat{x} - \mu$ onto $U_{\mathrm{top}}$ and takes
the inner product with the expected $\eta$ derived from the
published $(K, n, c)$.

\paragraph*{Empirical effect}
On the three above-threshold encoders, manifold alignment raises the
empirical V\_blind $\beta$ from $0.71$ (random $Q$ baseline) to
$0.851$ on MiniLM, $0.674$ on MPNet, and $0.551$ on CLIP at no cost
to baseline AUROC, baseline cosine, or direction-oracle defense; the
direction-oracle attack still yields AUROC $1.0000$ on all three.

\paragraph*{Security argument}
$U_{\mathrm{top}}$ is public (derivable from $\Sigmacalib$); the only
secret is $K$. Signatures inside the top subspace are PRF-pseudorandom
under $K$, so a direction-oracle pair $(x_0, x'_0)$ recovers only
$\delta_0$ at $(c_0, n_0)$ and does not transfer to fresh
$(c^*, n^*)$ (empirically AUROC $1.0000$).
Lemma~\ref{thm:retention} carries over with $\beta(\mathcal{A})$
measuring retention in the top-eigenvector subspace; the Wiener
bound is unchanged because knowing $U_{\mathrm{top}}$ adds nothing
over the data-prior $\Sigma$ already in
Eq.~\eqref{eq:wiener-filter}.

\subsection{Security analysis}\label{sec:security}
Under the published-commit protocol the verifier's score decomposes
into a null-distribution component plus a single signature-retention
term:
\begin{equation}\label{eq:score-decomp}
T \;=\; T_{\mathrm{null}} \;+\; \beta(\mathcal{A})\cdot\varepsilon,
\end{equation}
where $T_{\mathrm{null}}$ is the sub-Gaussian residual under $H_0$
and $\beta(\mathcal{A})\in[0,1]$ is the C4-bounded attacker's
signature retention.
Lemma~\ref{thm:retention} is the load-bearing result: it reduces
verifier TPR via~\eqref{eq:score-decomp} to a single scalar
$\beta(\mathcal{A})$. The main theoretical claim is
Theorem~\ref{thm:wiener-floor}, which gives a Gaussian first-order
characterisation of $\beta$ for the optimal adaptive-linear attacker
(within $13\%$ of empirical) plus a rigorous Cauchy--Schwarz lower
bound for any linear C4 attacker. An asymptotic ROM floor
(Lemma~S1) lifted to HKDF-as-PRF, and a non-adaptive
$\mathbb{E}[\beta]\!\geq\!0.97$ (Theorem~S4),
complete the picture.

\paragraph*{Scope of the guarantee}
We are explicit about which regime each bound reaches, because they do
not all cover the headline adversary. The Cauchy--Schwarz floor
(Theorem~\ref{thm:wiener-floor}, part~2) is distribution-free and holds
for \emph{any} linear C4 attacker, but is loose ($\beta\!\geq\!0.136$ on
MiniLM). Theorem~S4 gives a strong
$\mathbb{E}[\beta]\!\geq\!0.97$, but only for \emph{non-adaptive}
attackers (random/isotropic noise, fine-tuning drift) whose perturbation
is independent of the watermark direction. The adaptive KPA-DAE attacker
the paper is built around is \emph{correlated} with the watermark by
construction and so falls outside that hypothesis: for it, the only
rigorous statement is the loose floor, while its strong measured
defendability ($\beta\!=\!0.851$, Prop.~\ref{thm:floor}) is an
\emph{empirical} result that the Wiener analysis characterises (within
$13\%$) but does not bound. We lean on the empirical breadth for the
adaptive case and do not claim the strong bound covers it.

\paragraph*{Notation}
Let $\Wmat\!=\!(\Sigmacalib\!+\!\lambda_{\mathrm{reg}} I)^{-1/2}$ and
$T(z)\!\coloneqq\!\frac{1}{w}\!\sum_{i \in W}\!\langle (Qz)_i, s_i\rangle/\|s_i\|$.
Define
$\beta(\mathcal{A})\!\coloneqq\!\mathbb{E}_x[\frac{1}{w}\!\sum_{i\in W}
\langle (Qu_{\mathcal{A}})_i, s_i\rangle/\varepsilon]$.

\begin{assumption}[H-D, whitened]\label{assum:hd}
$T$ under $H_0$ on $D_W$ is $\sigma_T$-sub-Gaussian with
$\sigma_T^2\!\leq\!(1{+}\zeta)/(wd)$. Measured $\zeta\!=\!0.001\pm0.008$
on MS~MARCO MiniLM and $\zeta\!=\!0.011\pm0.012$ on MPNet over 5 seeds;
both BERT-family encoders satisfy the assumption. For non-BERT-family
encoders or deployments where Assumption~\ref{assum:hd} cannot be
empirically validated, the distribution-free Chebyshev variant of
Lemma~\ref{thm:retention} (App.~\mbox{S-I}) applies with the
constant $C_1\!\leq\!4$ instead of the tight Gaussian $\sqrt{2}$.
\end{assumption}

\begin{lemma}[Signature retention]\label{thm:retention}
Under Assumption~\ref{assum:hd}, conditional on the producer's C2PA
manifest signature on $(n, c, x')$ verifying:
\[
\mathrm{TPR}^*(\mathcal{A}) \,\geq\,
\Phi(\beta(\mathcal{A})\!\cdot\!\varepsilon\!\cdot\!\sqrt{wd/(1{+}\zeta)} - z_\alpha).
\]
\end{lemma}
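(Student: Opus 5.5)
The plan is to establish the score decomposition \eqref{eq:score-decomp} as an exact identity and then read off the TPR from a standard sub-Gaussian threshold argument. Conditioning on the C2PA manifest signature verifying means that the $(n,c)$ fed to \texttt{Verify} are exactly the producer's values. Hence the block set $W=\mathrm{PRF}_K(\texttt{"blocks"},n)$ and the signatures $s_i(c,n,K)$ at verification coincide with those used in \texttt{Embed}. This is the one place the published-commit step enters: the recomputed-commit failure mode of \S\ref{sec:dirOrac} cannot arise, because the $s_i$ are fixed given the manifest and do not depend on the attacked $\hat x$.

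\textbf{Step 1 (decomposition).} With $\hat x=\mathcal{A}(x')$ and $u_{\mathcal{A}}=\Wmat(\hat x-\mu)/\|\cdot\|$, I would decompose each block as
\[
(Qu_{\mathcal{A}})_i = r_i + \pi_i ,
\]
where $\pi_i$ is the projection of $(Qu_{\mathcal{A}})_i$ onto the unit vector $s_i/\|s_i\|$, taken in mean: its expected coefficient is $\beta$ times the inserted amplitude $\|s_i\|=\varepsilon$. The term $r_i$ is the residual. Substituting into $T$ gives
\[
T = T_{\mathrm{null}} + \beta(\mathcal{A})\,\varepsilon ,
\]
where $T_{\mathrm{null}}=\frac1w\sum_{i\in W}\langle r_i,s_i\rangle/\|s_i\|$ plus the centered fluctuation of the signal term. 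The value of $\beta$ matches the definition in the Notation paragraph by linearity of expectation.

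\textbf{Step 2 (null behaviour).} I would then argue that $T_{\mathrm{null}}$ has the $H_0$ law of Assumption~\ref{assum:hd}, or is dominated by it, i.e.\ is $\sigma_T$-sub-Gaussian with $\sigma_T^2\le(1+\zeta)/(wd)$. The argument is that, modelling the PRF as random, the $s_i$ are uniform directions on the $b$-sphere independent of the residual.

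\textbf{Step 3 (threshold).} Next, fix $\tau_{\mathrm{det}}=z_\alpha\sigma_T$ so that the FPR under $H_0$ is at most $\alpha$, using the Gaussian-tight form of the assumption. Then
\[
\Pr[T\ge\tau_{\mathrm{det}}]=\Pr\!\left[T_{\mathrm{null}}\ge z_\alpha\sigma_T-\beta\varepsilon\right].
\]
Applying the matching lower tail and substituting $\sigma_T\le\sqrt{(1+\zeta)/(wd)}$ yields
\[
\Phi\!\left(\beta\varepsilon\sqrt{wd/(1+\zeta)}-z_\alpha\right).
\]
Monotonicity of $\Phi$ lets us replace $\sigma_T$ by its upper bound.

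\textbf{Main obstacle.} The hard step is Step 2. An adaptive attacker's residual $r_i$ can correlate with $s_i$, so independence fails in general. Sub-Gaussianity also gives only upper-tail control, not the Gaussian CDF itself. I would first absorb all signature-correlated mass into $\beta$, by defining $\beta$ as the full projection coefficient so that $r_i\perp s_i$ by construction. I would then invoke Assumption~\ref{assum:hd} as an approximate-Gaussian hypothesis on the whitened score, which is what the measured $\zeta$ certifies. Where that hypothesis is unavailable I would fall back on Chebyshev with constant $C_1\le 4$, as the statement itself indicates. I would also note that treating $\beta$ as the expectation rather than the per-sample coefficient requires a concentration step, folded into $\zeta$.
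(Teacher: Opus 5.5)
Your skeleton matches the paper's sketch. Conditioning on the manifest signature makes the block set $W$ and the $s_i$ used by \texttt{Verify} bit-identical to the producer's. $T$ then splits into a $\beta\varepsilon$ mean plus a residual, and a tail bound at the $\alpha$-calibrated threshold finishes. You also correctly put the difficulty in Step~2, but your repair does not close it.

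Assumption~\ref{assum:hd} constrains only the law of $T$ on \emph{unwatermarked} inputs. The lemma needs the lower tail of $T(\mathcal{A}(x'))-\beta(\mathcal{A})\varepsilon$ \emph{under attack}. Your random-direction argument supplies this for non-adaptive $\mathcal{A}$: conditionally on $x$ and the perturbation, each residual projection is that of a fixed vector onto an independent uniform direction in $\mathbb{R}^b$, up to the renormalisation in \texttt{Verify}. It does not cover adaptive attackers. Redefining $\beta$ as the per-sample projection coefficient makes $\langle r_i,s_i\rangle\equiv 0$, but this only moves all the randomness into the per-sample retention $\beta_x$, on which Assumption~\ref{assum:hd} is silent. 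Folding that fluctuation into $\zeta$ changes the hypothesis ($\zeta$ is a measured $H_0$ constant) rather than using it.

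In fact, no argument from Assumption~\ref{assum:hd} plus the mean retention can work, because $\Phi$ is concave on $[0,\infty)$. Take a Gaussian null with $\zeta=0$. Take a C4-valid attacker that, independently per vector and with probability $1/2$ each, either leaves $x'$ unchanged or applies the C4-projected DAE. Give each branch the score law $\mathcal{N}(\beta_j\varepsilon,1/(wd))$ that the lemma presupposes, with $\beta_1=1$ and $\beta_2=0.851$. At the defaults, $\varepsilon\sqrt{wd}\approx 5.49$, and $z_\alpha\approx 3.09$ at $\alpha=10^{-3}$. The lemma then claims $\mathrm{TPR}^*\geq\Phi(0.9255\cdot 5.49-3.09)=\Phi(1.99)\approx 0.977$. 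The mixture actually achieves $\tfrac12\Phi(2.40)+\tfrac12\Phi(1.58)\approx 0.967$. You therefore need an explicit hypothesis on the \emph{attacked} score, for example that $T(\mathcal{A}(x'))-\beta(\mathcal{A})\varepsilon$ is lower-tail dominated by $\mathcal{N}(0,(1+\zeta)/(wd))$. The paper's sketch likewise just posits a ``sub-Gaussian residual'' at this point.

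Step~3 also uses more than the assumption provides. With only $\sigma_T$-sub-Gaussianity, the threshold $\tau_{\mathrm{det}}=z_\alpha\sigma_T$ is not guaranteed to have FPR $\leq\alpha$; the tail bound there is $e^{-z_\alpha^2/2}\approx 8.4\times 10^{-3}$ at $\alpha=10^{-3}$. Moreover, nothing lower-bounds $\Pr[T_{\mathrm{null}}\geq t]$ for $t>0$, which the regime $\beta\varepsilon\sqrt{wd/(1+\zeta)}<z_\alpha$ requires. The Chernoff route named in the paper's sketch, with $\sigma=\sqrt{(1+\zeta)/(wd)}$, gives threshold $\sigma\sqrt{2\ln(1/\alpha)}$. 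It yields $\mathrm{TPR}\geq 1-\exp\bigl(-\tfrac12(\beta\varepsilon\sqrt{wd/(1+\zeta)}-\sqrt{2\ln(1/\alpha)})^2\bigr)$ when the bracket is positive, which is weaker than the $\Phi(\cdot-z_\alpha)$ form. The displayed bound is thus a Gaussian idealisation, and your ``approximate-Gaussian'' reading should be stated as an added hypothesis, not attributed to Assumption~\ref{assum:hd}.

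A minor point: since $\|s_i\|=\varepsilon$, the Notation-paragraph definition of $\beta$ literally equals $\mathbb{E}[T]$. Equation~\eqref{eq:score-decomp} needs $\beta=\mathbb{E}[T]/\varepsilon$, so the two do not ``match by linearity''.
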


\noindent\emph{Proof sketch.} Because the verifier reads the producer's
published $c$, the verifier-side $s_i$ equals the producer-side $s_i$
bit-for-bit. The score $T(\mathcal{A}(\cdot))$ decomposes linearly into
a $\beta(\mathcal{A})\!\cdot\!\varepsilon$ retained-signature mean plus
a sub-Gaussian residual; a Chernoff tail finishes it. Full proof in
App.~\mbox{S-I}.

\paragraph*{Wiener floor (closed form)}
The optimal C4-bounded \emph{linear} attacker is the data-prior
Wiener filter
\begin{equation}\label{eq:wiener-filter}
R^*(\nu) \;=\; \Sigma\,(\Sigma + \nu I)^{-1}.
\end{equation}
Intuitively, this is the best linear attack that trades cosine
fidelity against watermark removal by aligning its perturbation with
the embedding's natural covariance: it succeeds where isotropic noise
fails because random directions cost more cosine per unit of
$\beta$-removal than the data-prior directions do. The cosine
constraint pins $\nu^*$ implicitly via:
\begin{equation}\label{eq:nu-star}
\frac{A(\nu^*) + \|\mu\|^2}{\sqrt{B(\nu^*) + \|\mu\|^2}} = \delta_{C4},
\end{equation}
where $A(\nu)\!=\!\sum_i \lambda_i^2/(\lambda_i\!+\!\nu)$ and
$B(\nu)\!=\!\sum_i \lambda_i^3/(\lambda_i\!+\!\nu)^2$ are computable
in $O(d)$ from $\Sigma$'s eigenvalues. Equation~\eqref{eq:nu-star} is
monotone in $\nu$, so $\nu^*(\delta_{C4}, \{\lambda_i\})$ is found by
1D bisection. Then:

\begin{theorem}[Wiener characterisation \&  rigorous bound]\label{thm:wiener-floor}
For any linear C4-bounded attacker on an embedder with covariance
$\Sigma$, eigenvalues $\{\lambda_i\}$:
\begin{enumerate}
\item \emph{Gaussian first-order characterisation}:
$\beta_{\mathrm{char}}(\delta_{C4}, \{\lambda_i\}) =
\phi(\nu^*) / \sqrt{d \cdot \rho(\nu^*)}$
where $\phi(\nu)\!=\!\sum_i \lambda_i/(\lambda_i\!+\!\nu)$,
$\rho(\nu)\!=\!\sum_i (\lambda_i/(\lambda_i\!+\!\nu))^2$. On MS~MARCO
MiniLM at $\delta_{C4}{=}0.95$: $\beta_{\mathrm{char}}\!=\!0.838$
vs.\ empirical Wiener attack $\beta_{\mathrm{emp}}\!=\!0.752$.
\item \emph{Rigorous Cauchy--Schwarz lower bound (any linear C4 attacker)}:
$\beta(\mathcal{A}) \geq \delta_{C4}^2 / \sqrt{1 + \kappa_{\mathrm{eff}} (1 - \delta_{C4}^2)}$
where $\kappa_{\mathrm{eff}}\!=\!\lambda_{\max}/\lambda_{1\%}$. On MiniLM
at $\delta_{C4}{=}0.95$: $\kappa_{\mathrm{eff}}\!=\!443$ giving
$\beta\!\geq\!0.136$.
\end{enumerate}
\end{theorem}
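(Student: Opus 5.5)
We work in the eigenbasis of $\Sigma$, so $\Sigma=\mathrm{diag}(\lambda_i)$. We model a clean embedding as $x=\mu+g$ with $g\sim\mathcal{N}(0,\Sigma)$, and the watermark as a small isotropic-in-law perturbation $\varepsilon s$, since the PRF signatures are pseudorandom unit directions. We treat a linear attacker as $\tilde{x}=\mu+R(x'-\mu)$ with $R$ a fixed matrix.

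\textbf{Part 1.} The retained signature is $\langle R s,s\rangle/\|s\|^2$. Averaging over the uniformly random direction $s$ gives $\mathrm{tr}(R)/d$. The cosine constraint must be normalised by $\|R(x'-\mu)\|$, because $\beta$ is measured after the verifier renormalises $u$. For $R=R^*(\nu)$ we have $\mathrm{tr}(R)=\phi(\nu)$. Taking the second moment of $Rs$ over $s$ gives $\|R\|_F^2/d=\rho(\nu)/d$. This yields the normalised retention $\phi/\sqrt{d\rho}$, which is exactly $\beta_{\mathrm{char}}$.

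For the cosine, we compute Gaussian first-order (ratio-of-expectations) approximations:
\[
\mathbb{E}\langle Rg+\mu,g+\mu\rangle=\mathrm{tr}(R\Sigma)+\|\mu\|^2=A(\nu)+\|\mu\|^2,
\]
\[
\mathbb{E}\|Rg+\mu\|^2=\mathrm{tr}(R\Sigma R^\top)+\|\mu\|^2=B(\nu)+\|\mu\|^2.
\]
With the appropriate normalisation, this produces \eqref{eq:nu-star}. Monotonicity in $\nu$ follows from differentiating termwise, which gives bisection.

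Optimality of the Wiener family among linear C4 attackers will be argued by a Lagrangian. Minimise $\mathrm{tr}(R)$ subject to a fixed mean-square distortion $\mathbb{E}\|Rg-g\|^2$. The stationarity condition $I=\eta(R-I)\Sigma$ type gives a spectral form that is diagonal in $U$. This is the standard MMSE argument, and it yields $R=\Sigma(\Sigma+\nu I)^{-1}$. We state Part 1 as a characterisation rather than an inequality, because the ratio-of-expectations step is only first order. The MiniLM numbers are then plugged in from the computed spectrum.

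\textbf{Part 2.} This is the rigorous part. Any linear C4 attacker satisfies $\cos(\tilde{x},x)\ge\delta_{C4}$. Decompose $\tilde{x}=a\,x+r$ with $r\perp x$. The constraint then gives $\|r\|^2\le \|\tilde x\|^2(1-\delta_{C4}^2)$ and $a\|x\|\ge\delta_{C4}\|\tilde{x}\|$.

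The retained signature equals the projection of the watermark onto the $a\,x$ part, which is at least $\delta_{C4}^2$ by applying cosine twice, minus the leakage through $r$. That leakage, after whitening, is bounded by Cauchy--Schwarz: $|\langle \Wmat r,s\rangle|\le\|\Wmat r\|\,\|s\|$.

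The whitening amplification $\|\Wmat r\|/\|\Wmat x\|$ is controlled by $\sqrt{\kappa_{\mathrm{eff}}}\,\|r\|/\|x\|$, once $\lambda_{\min}$ is replaced by the $1\%$ quantile $\lambda_{1\%}$ (a truncation justified by $\lambda_{\mathrm{reg}}$ and by the mass of the bottom $1\%$ directions). Renormalising the whitened vector then gives a denominator $\sqrt{1+\kappa_{\mathrm{eff}}(1-\delta_{C4}^2)}$. Combining yields the stated bound, and evaluating at $\kappa_{\mathrm{eff}}=443$ gives $0.136$.

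\textbf{Main obstacle.} The hard step is justifying $\lambda_{1\%}$ in place of $\lambda_{\min}$. Rigorously, this needs either an assumption that the attacker's residual places at most a bounded fraction of its mass in the bottom-$1\%$ eigendirections, or an absorption of that tail into $\lambda_{\mathrm{reg}}$. I would first try the latter. If it fails, I would state the bound with $\kappa=\lambda_{\max}/(\lambda_{\min}+\lambda_{\mathrm{reg}})$ and note that $\kappa_{\mathrm{eff}}$ is the operational proxy.
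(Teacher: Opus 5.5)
\textbf{Part 1.} Your computation matches the paper's sketch: substitute $R^*(\nu)$, obtain $A(\nu)+\|\mu\|^2$ and $B(\nu)+\|\mu\|^2$ for the cosine, and read off $\beta_{\mathrm{char}}=\phi/\sqrt{d\rho}$ from $\mathrm{tr}(R)$ and $\|R\|_F^2$. (The normaliser is really the second moment of $Ru$ over the isotropic whitened data rather than over $s$, but the two coincide.)

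Your optimality add-on is wrong, though. Minimising $\mathrm{tr}(R)$ at fixed $\mathbb{E}\|(R-I)g\|^2$ gives the stationarity condition $I+2\eta(R-I)\Sigma=0$, i.e.\ $R=I-c\,\Sigma^{-1}$. Its gains $1-c/\lambda_i$ are unbounded below; they are not $\lambda_i/(\lambda_i+\nu)$. The Wiener form instead minimises the denoising loss $\mathbb{E}\|R(g+n)-g\|^2$ with $n\sim\mathcal{N}(0,\nu I)$, which is a different problem. The paper does not prove optimality inside the theorem; it only evaluates the Wiener attacker. Drop this step rather than use it to extend Part 1 to arbitrary linear attackers.

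\textbf{Part 2 is where the genuine gap lies.} The paper stays in the spectral parametrisation and applies Cauchy--Schwarz to eigenvalue sums, $A(\nu)^2\le\mathrm{tr}(\Sigma)\,B(\nu)$. You instead use only the cosine constraint plus a worst-case bound on a residual, and no argument of that form can yield a positive floor.

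The output $\tilde{x}=x_{\mathrm{clean}}$ satisfies C4 with cosine $1$ and has $\beta=0$. In your decomposition this is $a=1$, $r=0$, and it refutes the step that ``the projection of the watermark onto the $a\,x$ part is at least $\delta_{C4}^2$''. A multiple of the clean vector carries only $T_{\mathrm{null}}$, so all retained signature lives in what you call leakage.

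The bound also fails quantitatively. In the unit whitened space the signature has norm $\|s\|=\varepsilon\sqrt{w}\approx 0.28$ and contributes $O(\|s\|^2)$ to the score. Your own amplification bound lets $\|\Wmat r\|/\|\Wmat x\|$ reach $\sqrt{\kappa_{\mathrm{eff}}(1-\delta_{C4}^2)}\approx 6.6$. Hence ``$\delta_{C4}^2$ minus $\|\Wmat r\|\,\|s\|$'' is negative, and your final ``combining'' step silently discards the subtracted term. Separately, $r\perp x$ does not give $\Wmat r\perp \Wmat x$, so the Pythagorean denominator is not justified either.

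What makes positivity possible is that the attacker is linear and key-independent. It sees neither $s$ nor $x_{\mathrm{clean}}$, and its output correlates with the PRF signature only through $\langle \Wmat R\Wmat^{-1}s,s\rangle$. Averaged over pseudorandom $s$, this is $\mathrm{tr}(R)\|s\|^2/d$.

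The bound must therefore be an average-case spectral statement:
\begin{itemize}
\item express $\beta$ through the eigen-gains of $R$ via $\mathrm{tr}(R)$ and $\|R\|_F$, as you already do in Part 1;
\item express the cosine constraint through the $\lambda$-weighted sums $A,B$;
\item pass between weighted and unweighted sums using Cauchy--Schwarz and an eigenvalue ratio, which is where $\kappa_{\mathrm{eff}}$ enters.
\end{itemize}
The correct stationary point above is exactly what such an argument must control: large negative gains on low-variance directions cost little cosine but count fully in $\mathrm{tr}(R)$ after whitening. So the $\lambda_{1\%}$-versus-$\lambda_{\min}$ issue you flag is substantive, not a technicality.
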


\noindent\emph{Proof sketch.} Writing the Wiener attacker as
$z\!=\!R^*(\nu)(x{-}\mu)+\mu$ and substituting
$\mathbb{E}[z^\top x]\!=\!A(\nu)+\|\mu\|^2$ and
$\mathbb{E}[\|z\|^2]\!=\!B(\nu)+\|\mu\|^2$ into the cosine constraint
gives~\eqref{eq:nu-star}; the score ratio at $\nu^*$ yields
$\beta_{\mathrm{char}}$, and Cauchy--Schwarz
($A(\nu)^2\!\leq\!\mathrm{trace}(\Sigma)\,B(\nu)$) gives the rigorous
bound. Full proof in App.~\mbox{S-I}.

\noindent\emph{Discussion.} The characterisation
$\beta_{\mathrm{char}}\!=\!0.838$ is a leading-order Gaussian
approximation; it overshoots $\beta_{\mathrm{emp}}\!=\!0.752$ by
about $0.09$ due to a per-sample Jensen correction scaling as
$1/\sqrt{d}$. Across the seven above-threshold encoders we measured,
the gap stayed below $0.1$ in every case. To check
this, we drew Gaussian samples from $\Sigma_{\mathrm{MiniLM}}$ and ran
the Wiener attack on them: the simulated $\beta$ of $0.7568$ matches
the MS-MARCO empirical value $\beta_{\mathrm{emp}}\!=\!0.752$ to within $0.005$, so the
remaining gap reflects the Gaussian approximation rather than
non-Gaussian structure in the data. The rigorous Cauchy--Schwarz
bound $\beta \geq 0.136$ is conservative, but it holds for any
linear C4 attacker without distributional assumptions and yields
computable scalars from the encoder's eigenspectrum alone.

\begin{proposition}[Empirical $\beta$-floor at the production default]\label{thm:floor}
At the manifold-aligned default ($\lambda_{\mathrm{reg}}{=}10^{-4}$,
top-$wb$ eigenvector alignment), 5 seeds, $N{=}1\,500$ test vectors:
the strict-C4 ($\cos\!\geq\!0.95$) sweep gives $\beta_{\min}\!=\!0.94$
(Gaussian-noise $\sigma\!=\!0.01$). On the relaxed $\cos\!\geq\!0.85$
budget admitting DAE-$10^5$, $\beta\!=\!0.851$ on MiniLM
($+10$~p.p.\ vs.\ random-$Q$ at the same $\lambda_{\mathrm{reg}}{=}10^{-4}$;
$+14$~p.p.\ vs.\ the $\lambda_{\mathrm{reg}}{=}10^{-3}$ random-$Q$ baseline).
$\beta_{\mathrm{V\_blind}}\!=\!0.851$ on MiniLM, $0.674$ on MPNet,
$0.551$ on CLIP. Native-budget DAEs match post-hoc-projected $\beta$
to within $|\Delta\beta|\!\leq\!5\!\times\!10^{-4}$.
\end{proposition}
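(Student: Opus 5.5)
Since Proposition~\ref{thm:floor} is an empirical statement, the ``proof'' I would give is a measurement protocol together with consistency checks against Lemma~\ref{thm:retention} and Theorem~\ref{thm:wiener-floor}. The first step is to fix the estimator. I would use the definition $\beta(\mathcal{A}) = \mathbb{E}_x[\frac{1}{w}\sum_{i\in W}\langle (Qu_{\mathcal{A}})_i, s_i\rangle/\varepsilon]$, replacing the rotated block $(Qu)_W$ by the projection onto $U_{\mathrm{top}}$ as \S\ref{sec:manifold} prescribes. I would evaluate it as a sample mean over the $N=1\,500$ test vectors and the 5 seeds. Because the published $c$ makes the verifier-side $s_i$ identical to the producer-side $s_i$, this mean is an unbiased estimate of $\beta$. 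I would also subtract the clean-vector score, so that only the retained signal is reported, and attach a seed-level standard error to every figure.

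The second step is the strict-C4 sweep. I would run every row of the 14-attack standard sweep, keep only operating points with $\cos(\tilde x, x_{\mathrm{clean}})\ge 0.95$, and take $\beta_{\min}$ as the minimum over the surviving rows. I expect this minimum to sit at Gaussian noise with $\sigma=0.01$. As a sanity check, this attack is non-adaptive, so Theorem~S4 predicts $\mathbb{E}[\beta]\ge 0.97$ in expectation, and a measured $0.94$ is then consistent with finite-$d$ norm-renormalisation loss. I would also confirm that no strict-C4 row falls below the Cauchy--Schwarz floor of $0.136$ from Theorem~\ref{thm:wiener-floor}.

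The third step covers the relaxed $\cos\ge 0.85$ budget, which admits the DAE-$10^5$ attacker. I would train that attacker on $10^5$ KPA tuples and measure V\_blind $\beta$ on MiniLM, MPNet and CLIP. I would then repeat the MiniLM measurement with random $Q$ at $\lambda_{\mathrm{reg}}\in\{10^{-4},10^{-3}\}$, holding the training data and seeds fixed, and report the paired differences of $+10$ and $+14$ percentage points.

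The step I expect to be the main obstacle is the final claim, that native-budget DAEs match post-hoc-projected $\beta$ to within $5\times10^{-4}$. Post-hoc projection onto the cosine ball can itself remove signal and bias $\beta$ in either direction, so the two estimates are not obviously comparable. I would first train DAEs whose loss enforces the budget through a constrained output layer, and separately train unconstrained DAEs whose outputs are radially projected onto $\cos=\delta_{\mathrm{C4}}$. I would then compare the two on the same inputs and seeds, using paired differences so that the shared per-vector variance cancels and a gap as tight as $5\times10^{-4}$ becomes resolvable. If this comparison is noisy, I would first check whether the constrained DAE's outputs actually saturate the budget, since only saturated outputs make the comparison meaningful.
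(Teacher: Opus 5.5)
Your route matches the paper's: the proposition is supported only by Table~\ref{tab:rob}, Table~\ref{tab:multimodel-above-thr}, the default-parameters paragraph of \S\ref{sec:setup}, and App.~S-II. However, your second step does not deliver the stated number. Take $\beta_{\min}$ as the minimum of $\beta$ over rows with $\cos(\tilde x, x_{\mathrm{clean}})\ge 0.95$. The paper's own sweep has six such rows, and one of them is quant-int4, with $\cos(\tilde x, x_{\mathrm{clean}}) = 0.9577$, an explicit C4-valid mark, and $\beta = 0.847$. The minimum over the strict-C4 rows is therefore $0.847$ at quant-int4, not $0.94$ at Gaussian noise $\sigma = 0.01$. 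Your sentence ``I expect this minimum to sit at $\sigma=0.01$'' is exactly the point that needed checking, and the reported data contradict it. To get the clause as written you would need an exclusion rule, such as restricting to additive-noise attacks, that neither the statement nor your protocol specifies. Without one, the faithful output of your procedure corrects the clause to $\beta_{\min} = 0.847$.

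Your Theorem~S4 sanity check is also backwards. If that theorem gives $\mathbb{E}[\beta]\ge 0.97$ for non-adaptive isotropic noise, a measured mean of $0.94$ over $5\times 1\,500$ vectors with tight intervals contradicts it. Attributing the shortfall to finite-$d$ renormalisation loss amounts to saying this row lies outside the theorem's hypotheses, in which case the check tests nothing; quant-int4 at $0.847$ falls further below still. The Cauchy--Schwarz floor of $0.136$ is too weak to discriminate anything, and as a linear-attacker bound it does not cover quantisation at all.

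Two smaller points. First, the $0.674$ (MPNet) and $0.551$ (CLIP) values come from the 3-seed, C4-projected-to-$\cos\ge 0.95$ measurement of Table~\ref{tab:multimodel-above-thr}, not from a relaxed-budget 5-seed protocol like your third step. You need to fix which protocol each figure refers to. Second, a radial rescaling leaves the cosine unchanged, so your post-hoc projection must move $\tilde x$ within a plane, either toward $x_{\mathrm{clean}}$ or back toward $x'$. These choices shift $\beta$ in opposite directions: toward $x_{\mathrm{clean}}$ lowers it and toward $x'$ raises it. The choice has to be fixed before a $5\times 10^{-4}$ agreement means anything.
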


\paragraph*{Each component is necessary}
Removing any one ingredient produces a known failure
(Table~\ref{tab:baselines}). A content-agnostic signature breaks at
the direction-oracle attack. Skipping whitening lets a
manifold-projection attacker succeed. Reusing the same nonce across
embeddings lets a polynomial-pair attacker invert the per-key
transform. Recomputing the commit at verification time produces the
$62\%$ LSH-bucket-flip fragility of \S\ref{sec:dirOrac}. Using a
random rotation $Q$ instead of the manifold-aligned variant costs 10
to 19 percentage points of $\beta$ on above-threshold encoders.

\subsection{C2PA manifest binding}\label{sec:c2pa}
The \halomark{} manifest sidecar contains four fields per watermarked
embedding plus one per-key calibration reference:

{\scriptsize\begin{verbatim}
manifest = {
  "calibration_corpus_id": "MS-MARCO-MiniLM-Sigma-v1",
  "calibration_sigma_sha256": "<32B hex>",
  "vectors": [{ "x_prime": <d-dim float>,
                "nonce": <16B>, "commit": <8B>,
                "text_minhash": <512B; Mode C> }],
  "signature": <ECDSA over manifest> }
\end{verbatim}}

\noindent$K = \mathrm{HKDF}(\mathrm{C2PA.manifest\_claim\_hash},\,
\texttt{"HaloMark"})$.
$(\Wmat, \mu, \mathrm{signal\_scale})$ is loaded from the public
calibration registry indexed by \texttt{calibration\_corpus\_id} and
verified to hash to \texttt{calibration\_sigma\_sha256}.

\paragraph*{Verifier protocol}
The verifier (i)~validates the C2PA signature, (ii)~derives $K$, (iii)~loads
$(\Wmat, \mu)$ and verifies its SHA-256, (iv)~runs $\texttt{Verify}(K,\hat{x},
\mathrm{nonce}, c, \tau_{\mathrm{det}})$ on each vector. All four must
succeed; manifest, nonce/commit, embedding, or calibration tampering each
break one of these checks.

\paragraph*{Operating modes}
We distinguish three deployment modes (orthogonal, combinable):
\textbf{Mode A} (default, trusted-registry $\Sigmacalib$),
\textbf{Mode B} (producer-signed $\Sigma_{\mathrm{self}}$,
\S\ref{sec:baselines}), and \textbf{Mode C} (text-MinHash addendum
for paraphrase defence, \S\ref{sec:mode-c}).

\paragraph*{Sidecar overhead}
At defaults $(\lambda_n, k) = (16, 8)$ bytes, per-vector overhead is
$24$~bytes ($1.6\%$ of a $1{,}536$-byte float-32 embedding); Mode C
(\S\ref{sec:mode-c}) adds $512$ bytes for the text-MinHash sketch. End-to-end
demonstration in \S\mbox{S-III}.

\subsection{Mode C: text-hash binding for paraphrase defense}\label{sec:mode-c}

For deployments where source text travels with the embedding (RAG,
document indexes), the producer adds a $64$-element MinHash sketch
over character $5$-grams of the source text, keyed by
$\mathrm{HKDF}_{K}(\texttt{"text-hash"}\,\|\,\textsf{calibration\_corpus\_id})$,
to the manifest as a new field \texttt{text\_minhash}. Each
(encoder, producer) pair gets its own commitment domain (sketch is
text-derived but encoder-bound). The verifier accepts iff Jaccard
similarity to the published sketch exceeds $\tau_{\mathrm{C}}$.
Cost: $512$ bytes/passage; default $\tau_{\mathrm{C}}{=}0.70$
(unified cross-encoder operating point; see below).

On 200 MS-MARCO passages: accepts $100\%$ of routine perturbations and
light edits (up to $\sim$$10\%$ token replacement), \textbf{rejects
$\mathbf{92\%}$ of multi-attempt paraphrase} (4 attempts/passage,
attacker-best variant; Jaccard mean $0.301$ vs.\ $0.863$ for light
edits). On the MS-MARCO subset alone, $\tau_{\mathrm{C}}{=}0.65$ gives
Youden index $1.000$ (perfect separation): $100\%$ routine TPR,
$0\%$ paraphrase FPR; for cross-encoder deployment we use
$\tau_{\mathrm{C}}{=}0.70$ (next sentence).
At a unified $\tau_{\mathrm{C}}{=}0.70$, Mode C extends across all
eight tested encoders --- the three above-threshold (MiniLM, MPNet,
CLIP) and the five below-threshold (BGE-1024, bge-base-768, e5-large,
T5-768, gte-large) --- at routine TPR ${\geq}99.5\%$ and paraphrase
FPR ${\leq}0.5\%$. Encoder-invariance is at the aggregate level: the
max paraphrase-FPR delta across the eight encoders at this threshold
is $0.005$ (0.5 percentage points), with per-passage exact-match
agreement of $85.3\%$. The residual disagreement reflects the
per-encoder MinHash keying above: each encoder's sketch is drawn
from an independent $64$-hash domain, so passages whose Jaccard
sits near $\tau_{\mathrm{C}}$ can split decisions across encoders
even though the aggregate FPRs agree.
Mode C is complementary to the perturbative watermark, not a
replacement; vector-only EaaS deployments without source text remain
bound by the paraphrase-open limitation (\S\ref{sec:disc}).
\textbf{Positioning.} Mode C complements the perturbative scheme
rather than extending it: paraphrase defense comes from the
text-level check, routine-perturbation and KPA-DAE defense come from
the perturbative layer, and both live under the same C2PA-signed
sidecar. In deployments where source text travels with the embedding
(RAG, document indexes), Mode C closes the paraphrase regime; in
pure-vector deployments without source text, the perturbative scheme
remains paraphrase-open (\S\ref{sec:disc}).

\section{Implementation}\label{sec:impl}
\paragraph*{Reference implementation}
\halomark{} is implemented in $\sim$$4.8\,$kLoC of Python: $1{,}640$ lines for \texttt{core.py} plus construction primitives (\texttt{whitening}, \texttt{commitment}, \texttt{keys}, \texttt{orthogonal}, \texttt{metrics}, \texttt{c2pa\_binding}, \texttt{manifold\_aligned}, \texttt{higher\_order\_watermark}), the attack suite (\texttt{attacks}, \texttt{attacks\_lipschitz}, \texttt{attacks\_sidecar}, \texttt{attacks\_adversarial\_signatures}, \texttt{attacks\_score\_distillation}), and four baseline ports (\texttt{baselines}, \texttt{baselines\_seal}, \texttt{baselines\_treering}, \texttt{baselines\_espew}). NumPy is the only numerical dependency for the core primitives; the \texttt{cryptography} library handles ECDSA and HKDF/PRF/HMAC. Orthogonal-matrix sampling uses the Householder reflection method over PRF-seeded Gaussian vectors~\cite{stewart1980orthsampling}, producing uniform samples from $\mathcal{O}(b)$ per block; sign-corrected QR ensures byte-identical $Q$ across NumPy / OpenBLAS / Accelerate (a 30-line wrapper around \texttt{numpy.linalg.qr}). The dominant cost in \texttt{Embed} and \texttt{Verify} --- the block-diagonal orthogonal rotation in $W$-space --- is BLAS-bound; LSH commit and per-block PRF derivation are constant-time SHA-256 calls and account for ${<}5\%$ of per-vector latency at $d{=}384$. A Dockerfile with pinned dependencies and fixed seeds (0--4) reproduces every number in this paper from scratch in ${\sim}6$\,h on a single 32-core CPU; the per-claim$\to$JSON mapping is in App.~\mbox{S-IV}.

\paragraph*{Verifier-side throughput}
On a single core of an Apple M-series CPU, \texttt{Verify} costs $180~\mu$s per vector at $d{=}384$ (MiniLM) and $199~\mu$s at $d{=}1024$ (ada-2 class); \texttt{Embed} is $194$ and $238~\mu$s/vec respectively.

\paragraph*{Vector-database integration}
The verifier exposes \texttt{Verify}$(K, x', \textnormal{nonce}, c)$ as a per-vector callable that vector databases can invoke at insert time and at query time before returning hits to a downstream LLM. The callable is stateless after \texttt{KeyGen} and a one-time load of the public whitening matrix $\Wmat$ (1.18~MB at $d{=}384$); a typical insert-side filter pipeline costs $\sim$$200~\mu$s per accepted vector on top of an existing $\gtrsim 1$~ms vector-database write.

\paragraph*{C2PA library}
The C2PA-binding path uses the reference \texttt{c2pa-python} bindings (which wrap the official \texttt{c2pa-rs} Rust SDK). The producer signs a manifest containing a SHA-256 digest of the per-vector \halomark{} sidecar (a custom \texttt{org.halomark.v1.sidecar\_digest} assertion); the verifier reads the manifest with \texttt{c2pa.Reader}, validates the C2PA signature, recomputes the sidecar digest, and runs \texttt{Verify} on each tuple. End-to-end interop is exercised in \S\mbox{S-III}.

\section{Evaluation}\label{sec:eval}

\subsection{Experimental setup}\label{sec:setup}
\paragraph*{Datasets}
We evaluate on MS~MARCO Passage Ranking~\cite{nguyen2016msmarco}: 100\,000
passages from the \texttt{train} split form the public calibration corpus
(used to compute $\Sigma_{\mathrm{calib}}$, $\mu$, and
$\mathrm{signal\_scale}$); a disjoint 1\,500-passage subset per seed is
the test corpus, and another 1\,500 are null embeddings for FPR
calibration. The denoising-autoencoder attacker has access to 100\,000
augmented training pairs unless otherwise noted.

\paragraph*{Embedding models}
\texttt{sentence-transformers/\allowbreak all-MiniLM-L6-v2}~\cite{reimers2019sbert,minilm_l6_v2_card}
($d{=}384$) is the default. Ten additional embedders are reported in
\S\ref{sec:rob}, \S\ref{sec:threshold}, and \S\ref{sec:baselines}:
all-MiniLM-L12-v2 ($d{=}384$), distilroberta-v1 ($d{=}768$), MPNet-768,
multi-qa-mpnet-base ($d{=}768$), CLIP-ViT-B/32 text head ($d{=}512$),
bge-base-en-v1.5 ($d{=}768$), sentence-T5-base ($d{=}768$),
e5-large-v2 ($d{=}1024$), BGE-large-en-v1.5 ($d{=}1024$), and
gte-large ($d{=}1024$). Together with MiniLM-L6 this is the
eleven-embedder set quoted in the abstract and \S\ref{sec:threshold}.

\paragraph*{Default parameters}
$(d, w, \varepsilon, k, B) = (384, 16, 0.07, 8, 4)$ with regulariser
$\lambda_{\mathrm{reg}} = \mathbf{10^{-4}}$. The $\lambda_{\mathrm{reg}}$
change alone (relative to the prior $10^{-3}$) raises the random-$Q$
baseline $\beta_{\mathrm{V\_blind}}$ from $0.705$ to $0.752$ at zero
cost on baseline AUROC, cosine fidelity, dir-oracle defence, or
PCA-$d/2$ defence; combined with the manifold-aligned variant
(\S\ref{sec:manifold}) the production default reaches
$\beta_{\mathrm{V\_blind}}\!=\!0.851$ on MiniLM (Table~\ref{tab:rob}).
Nonce length $\lambda_n = 128$~bits.
The signal-scale calibration on MS~MARCO yields
$\mathbb{E}[\|\Wmat x\|] \approx 14.6$ across the 100\,000-passage
corpus.

\paragraph*{Metrics}
\begin{itemize}
\item \emph{Detection AUROC}: Mann--Whitney rank statistic between scores
on watermarked-then-attacked vectors and unwatermarked nulls, with
mid-rank tie handling. Reported with 95\,\% bootstrap CI over 5 seeds.
\item \emph{Operational TPR/FPR}: empirical at the threshold that
achieves $\Pr_{\mathrm{null}}[T \geq \tau_{\mathrm{det}}] = 10^{-3}$.
AUROC and operational TPR can diverge under binary-decision attackers
(\S\ref{sec:rob}, \S\ref{sec:baselines}).
\item \emph{$\beta(\mathcal{A})$}: signature retention defined per
Lemma~\ref{thm:retention}.
\item \emph{Cosine fidelity}: $\cos(\tilde{x}, x_{\mathrm{clean}})$ and
$\cos(\tilde{x}, x_{\mathrm{wm}})$ per attacked vector, averaged across
the test set.
\item \emph{Retrieval}: MRR@10, Recall@100, nDCG@10 against MS~MARCO
dev queries.
\end{itemize}

\paragraph*{Baselines}
Eight baselines run through the same attack suite (\S\ref{sec:baselines}):
\texttt{AddDir}, \texttt{WET-approx}~\cite{shetty2025wet},
\texttt{SEAL-only}~\cite{arabi2025seal},
\texttt{Tree-Ring}~\cite{wen2023treering},
\texttt{ESpeW}~\cite{wang2025espew},
faithful ports of \texttt{EmbMarker}~\cite{peng2023embmarker} and
\texttt{WARDEN}~\cite{shetty2024warden} with a published-verifier
KS-test substitute, and a content-agnostic ablation \texttt{HaloMark-CA}
(BDOT plus fixed key-derived signatures, no whitening, no nonce, no
commit).

\paragraph*{Reproducibility}
All experiments use deterministic seeds (0--4 for 5 seeds; seed 0 alone
where indicated). Calibration corpus hashes, embedding caches, and
per-experiment configuration are captured in \texttt{results/}; see
Appendix~\mbox{S-IV}.

\subsection{Imperceptibility}\label{sec:imperc}
We watermark the 100,000-passage MS~MARCO corpus at default
parameters and evaluate retrieval impact on $2{,}000$ dev queries
with per-query bootstrap $95\%$ CIs over $1{,}000$ resamples.
Average cosine fidelity is $\cos(x_{\mathrm{wm}}, x_{\mathrm{clean}}) = 0.967$.
Our retrieval numbers were measured at $\lambda_{\mathrm{reg}}=10^{-3}$
(cosine $0.9713$); the $0.005$ drift to the current
$\lambda_{\mathrm{reg}}=10^{-4}$ default is well below BEIR's
per-corpus reproducibility band, so we report the existing
measurements. With $\Delta = \mathrm{clean} - \mathrm{watermarked}$,
the measured deltas are $\Delta\mathrm{MRR}@10 = +0.0039$
(95\% CI $[-0.0021, +0.0097]$), $\Delta\mathrm{nDCG}@10 = +0.0032$
(CI $[-0.0015, +0.0078]$), and $\Delta\mathrm{Recall}@100 = +0.0015$
($0.15\%$ relative). MRR@10 and nDCG@10 CIs straddle zero; the
Recall@100 CI excludes zero but is below BEIR's per-corpus
reproducibility band of $\pm 0.005$~\cite{thakur2021beir}, i.e.
operationally inert. The full table is in
Appendix~\mbox{S-III}.

\paragraph*{Why $\delta = 0.95$}
At $\cos = 0.95$, MRR@10 drops by $0.7\%$ relative while 1-NN identity
match drops by $7.7\%$. At $\cos = 0.85$ the gap widens to $1.8\%$
versus $13.3\%$. Since 1-NN identity is the load-bearing metric for
deduplication, near-duplicate detection, biometric attribution, and
RAG provenance, $\delta = 0.95$ is the right budget for
identity-preserving deployments; the construction also holds at
$\cos \geq 0.85$ for the looser ranked-retrieval budget.

\subsection{Robustness above threshold}\label{sec:rob}
We run the full 15-attack sweep on three above-threshold encoders
(MiniLM, MPNet, CLIP) at the manifold-aligned default
($\lambda_{\mathrm{reg}}{=}10^{-4}$, top-$wb$ eigenvector alignment);
the other three above-threshold encoders (MiniLM-L12, distilroberta,
multi-qa-mpnet) appear in Table~\ref{tab:threshold} with a
single-seed DAE-C4 measurement. \S\ref{sec:threshold} covers
below-threshold encoders.
We use 5 seeds with $N{=}1{,}500$ test vectors per seed, $10^5$ KPA
pairs available to the adversary, and 95\% bootstrap confidence
intervals. The unattacked AUROC on MiniLM is $1.0000$, with
producer-side cosine $\cos(x_{\mathrm{wm}}, x_{\mathrm{clean}}) = 0.9669$.
Of the 15 attacks in Table~\ref{tab:rob}, 6 remain inside the strict
C4 budget $\cos \geq 0.95$, and 2 more (PCA-$d/2$ and DAE-$10^5$)
become admissible at the relaxed $\cos \geq 0.85$ budget.

\begin{table*}[t]
\centering
\caption{Robustness on MS~MARCO MiniLM-L6 at the manifold-aligned
default (5 seeds, bootstrap CIs). $^\ddagger$ marks rows admitted only
at the relaxed $\cos\!\geq\!0.85$ budget. The production default
(manifold alignment at $\lambda_{\mathrm{reg}}{=}10^{-4}$) reaches
$\beta_{\mathrm{DAE}}\!=\!\mathbf{0.851}$: $+10$~p.p.\ from alignment
over random-$Q$ at the same $\lambda_{\mathrm{reg}}{=}10^{-4}$ ($0.752$),
and $+14$~p.p.\ over the $\lambda_{\mathrm{reg}}{=}10^{-3}$ random-$Q$
baseline ($0.71$).}
\label{tab:rob}
\small
\begin{tabular}{l r r r r c}
\toprule
attack & AUROC [95\,\% CI] & $\cos(\tilde{x}, x_{\mathrm{clean}})$ & $\cos(\tilde{x}, x_{\mathrm{wm}})$ & $\beta$ & C4-valid \\
\midrule
quant-int8                 & 1.0000 [0.9999, 1.0000] & 0.9669 & 1.0000 & 1.000 & \textbf{YES} \\
quant-int4                 & 0.9982 [0.9978, 0.9986] & 0.9577 & 0.9892 & 0.847 & \textbf{YES} \\
quant-binary               & 0.9877 [0.9870, 0.9884] & 0.7674 & 0.7906 & 0.595 & no \\
PCA $k = d/2$              & 0.9944 [0.9941, 0.9948] & 0.9072 & 0.9281 & 0.84  & $^\ddagger$ \\
PCA $k = d/4$              & 0.9608 [0.9596, 0.9621] & 0.7672 & 0.7833 & 0.469 & no \\
PCA $k = d/8$              & 0.8934 [0.8917, 0.8954] & 0.6227 & 0.6353 & 0.333 & no \\
random-proj $k = d/2$      & 0.9505 [0.9501, 0.9509] & 0.6824 & 0.7037 & 0.444 & no \\
random-proj $k = d/4$      & 0.8528 [0.8516, 0.8537] & 0.4817 & 0.4969 & 0.280 & no \\
Gaussian noise $\sigma = 0.01$ & 0.9999 [0.9999, 1.0000] & 0.9519 & 0.9806 & 0.94  & \textbf{YES} \\
Gaussian noise $\sigma = 0.05$ & 0.9565 [0.9555, 0.9576] & 0.6909 & 0.7117 & 0.455 & no \\
Gaussian noise $\sigma = 0.10$ & 0.8220 [0.8155, 0.8284] & 0.4397 & 0.4524 & 0.247 & no \\
fine-tuning drift, 1ep     & 0.9999 [0.9999, 1.0000] & 0.9668 & 0.9999 & 1.000 & \textbf{YES} \\
fine-tuning drift, 100ep   & 0.9999 [0.9999, 1.0000] & 0.9659 & 0.9899 & 0.996 & \textbf{YES} \\
\textbf{DAE, $10^5$ pairs} & \textbf{0.9971 [0.9967, 0.9973]} & \textbf{0.9381} & 0.9698 & \textbf{0.851} & $^\ddagger$ \\
direction-oracle, $10^5$ pairs & 1.0000 [0.9999, 1.0000] & 0.9670 & 1.0000 & 1.000 & \textbf{YES} \\
\bottomrule
\end{tabular}
\end{table*}

\paragraph*{Cross-embedder above-threshold (Table~\ref{tab:multimodel-above-thr})}
DAE-C4 AUROC stays in $[0.990, 0.997]$ across MiniLM, MPNet, and
CLIP, with direction-oracle AUROC $1.000$ on all three; $\beta$
drops from $0.851$ on MiniLM to $0.551$ on CLIP, tracking the
inverse trend of $\effrank/d$. On below-threshold encoders the
DAE drives $\cos_{\mathrm{clean}}{<}0.95$ before reaching the
AUROC trip-wire, so the C4 budget catches the attack by design
(\S\ref{sec:threshold}).

\begin{table}[h]
\centering
\caption{Above-threshold multi-embedder, manifold-aligned default
(3 seeds). DAE-$10^5$ is C4-projected to $\cos{\geq}0.95$. Small
differences from Table~\ref{tab:threshold} (e.g.\ CLIP $0.992$ vs
$0.991$) reflect the 3- vs 5-seed counts.}
\label{tab:multimodel-above-thr}
\footnotesize
\setlength{\tabcolsep}{4pt}
\begin{tabular}{l c c c c}
\toprule
embedder & $\cos_{\mathrm{base}}$ & DAE\_C4 AUROC & DAE\_C4 $\beta$ & dir-oracle \\
\midrule
MiniLM-384      & 0.963 & 0.997 & 0.851 & 1.000 \\
MPNet-768       & 0.963 & 0.990 & 0.674 & 1.000 \\
CLIP-512        & 0.963 & 0.992 & 0.551 & 1.000 \\
\bottomrule
\end{tabular}
\end{table}

\paragraph*{Strong-attacker stress test (Appendix~\mbox{S-II})}
On MiniLM we pressure-test the construction with ten adaptive-attacker
variants spanning DAE capacity/data/depth, sidecar-aware DAEs (FiLM
and $65{,}536$-bucket gating), Lipschitz-bounded DAEs, native-budget
DAEs, and surrogate-PGD / continuous-score score-distillation
attackers --- enumerated in App.~\mbox{S-II}. $\beta$ stays
above $0.997$ across the DAE/Lipschitz/native-budget block; the two
surrogate attackers leave $\beta\!\geq\!0.989$ on every above-threshold
encoder with surrogate-score Pearson saturating at $0.0$--$0.2$ on
seven of eight (CLIP outlier $\sim$$0.5$). Native-budget DAEs match
post-hoc-projected $\beta$ to within $|\Delta\beta|\leq 5\times 10^{-4}$.

\paragraph*{Verifier-oracle attackers (outside C4)}
The C4 model excludes verifier-oracle access. With continuous-score
oracle access, autograd PGD~\cite{madry2018pgd} drives AUROC to $0$
at $q\!=\!64$. With 1-bit binary access against a deterministic
threshold verifier, HopSkipJump~\cite{chen2020hopskipjump} forces
operational rejection on $98.6\%$ of vectors at $q\!=\!20$, even
though AUROC remains at $0.93$. Reporting only AUROC therefore
understates the severity of an active oracle attacker, and
\S\ref{sec:dpdef} accordingly recommends that production verifiers
rate-limit queries per (key, source) pair. A Laplace-noised binary
decision provides partial cover at the cost of legitimate-TPR
(falling from $0.99$ to $0.73$ at noise scale $b=0.05$).

\subsection{Spectral threshold and structural limit}\label{sec:threshold}
The construction defends against polynomial-KPA DAE attackers if
and only if the encoder's effective-rank ratio sits above a clean
threshold. The effective rank
$\effrank(\Sigma) = (\sum_i\lambda_i)^2/\sum_i\lambda_i^2$
(participation ratio) measures how many directions the data
manifold meaningfully occupies. Table~\ref{tab:threshold} reports
the C4-projected V\_blind DAE attack on the manifold-aligned
construction across 11 embedders. The encoders separate at an
empirical threshold $\effrank/d \approx 0.19$: above it the
construction defends (DAE-C4 AUROC $0.965$--$0.997$); at
$\effrank/d = 0.180$ defendability has already collapsed
(AUROC $0.602$, well below the $0.85$ deployment trip-wire), and
below it every variant we tested sits in the failed band
(AUROC $0.59$--$0.71$). The transition is empirically clean, but the
threshold \emph{value} rests on the CLIP/bge-base pair that straddles
the line (CLIP at $0.197$ defends across all five seeds, bge-base at
$0.180$ fails), so we treat $0.19$ as an empirical regularity rather
than a derived constant. It is also dimension-uniform across
$d\in\{384,512,768,1024\}$---which our first-order model does
\emph{not} predict (Eq.~\eqref{eq:thresh-pred})---and we return to
this as the central open question below.

\begin{table}[h]
\centering
\caption{Spectral threshold (manifold-aligned, DAE-$10^5$ projected to
$\cos\!\geq\!0.95$, $N{=}1\,000$). DAE-C4 is the 5-seed mean
$\pm$ 95\,\% CI half-width (seeds 0--4); $^\dagger$ marks the four
encoders measured at a single seed. Above the dashed line the
construction defends; below, every encoder falls well under the $0.85$
trip-wire (5-seed-confirmed except bge-base). CLIP
($\effrank/d{=}0.197$), the defending side of the straddle, holds
across all five seeds ($[0.989,0.994]$).}
\label{tab:threshold}
\footnotesize
\setlength{\tabcolsep}{3pt}
\begin{tabular}{l r r r r c}
\toprule
embedder & $d$ & $\effrank$ & $\effrank/d$ & DAE-C4 & def. \\
\midrule
MiniLM-L6      & 384  & 169.1 & 0.440 & $0.997\pm.001$ & \textbf{Y} \\
MiniLM-L12     & 384  & 161.6 & 0.421 & $0.991^\dagger$ & \textbf{Y} \\
distilroberta  & 768  & 199.8 & 0.260 & $0.973^\dagger$ & \textbf{Y} \\
MPNet          & 768  & 196.8 & 0.256 & $0.987\pm.001$ & \textbf{Y} \\
multi-qa-mpnet & 768  & 183.7 & 0.239 & $0.965^\dagger$ & \textbf{Y} \\
CLIP           & 512  & 101.0 & 0.197 & $0.991\pm.002$ & \textbf{Y} \\
\midrule
\multicolumn{6}{c}{$\effrank/d \approx 0.19$ {\scriptsize threshold}} \\
\midrule
bge-base       & 768  & 138.6 & 0.180 & $0.602^\dagger$ & \textbf{N} \\
T5             & 768  & 121.9 & 0.159 & $0.706\pm.041$ & \textbf{N} \\
e5-large       & 1024 & 161.5 & 0.158 & $0.594\pm.019$ & \textbf{N} \\
BGE-large      & 1024 & 157.6 & 0.154 & $0.606\pm.024$ & \textbf{N} \\
gte-large      & 1024 & 138.6 & 0.135 & $0.653\pm.028$ & \textbf{N} \\
\bottomrule
\end{tabular}
\end{table}

\paragraph*{Why the threshold}
On a manifold of low effective rank, any perturbative watermark
direction is either off-manifold, in which case manifold projection
removes it, or on-manifold, in which case it is indistinguishable
from data variability. A DAE trained on KPA pairs learns the
manifold projection $\Pi$ in finite samples and the watermark
collapses regardless of which subspace it occupies.

\paragraph*{Six structural approaches}
To check that this is not specific to a single perturbation pattern
we tested six structurally distinct approaches on the below-threshold
encoders. V1 is additive uniform-$Q$ (the random-$Q$ baseline). V2
is manifold-aligned additive (our default). V3 is multiplicative
rotation. V4 is anchor-pulling, with
$x_{\mathrm{wm}}\!=\!(1{-}\alpha)x + \alpha\,\mathrm{anchor}(K,n,c)$.
A1 is adversarial-carrier selection: a 5-DAE bootstrap-resampled
ensemble identifies eigvecs where the watermark survives a held-out
DAE seed, and the production watermark lives in that DAE-resilient
subspace. A2 is multi-key disjoint-chunk voting with per-chunk
amplitude $\varepsilon/\sqrt{M}$. Table~\ref{tab:variants} shows
that none of V1--V4 defend below the threshold; on above-threshold
encoders V2 improves $\beta$ over V1 by 10 to 19 percentage points.
Across all six approaches $\times$ the below-threshold encoders
(V1--V4 on four encoders in Table~\ref{tab:variants}, V2 on
bge-base-768 in Table~\ref{tab:threshold}, A1/A2 on all five),
no \emph{(approach, encoder)} cell clears the AUROC ${\geq}0.85$
trip-wire (App.~\mbox{S-II}). The structural-limit reading
holds across every variant we tested.

\begin{table}[h]
\centering
\caption{Four V-variants $\times$ four below-threshold embedders.
bge-base-768 (V2 = $0.602$ per Table~\ref{tab:threshold}) is omitted
from V1, V3, V4 columns for compactness; the structural-limit
reading is supported by the full Table~\ref{tab:threshold} sweep.
Cell values for the two additional approaches (A1 adversarial-carrier,
A2 multi-key voting) are reported in-line in the
six-structural-approaches paragraph above: none of the 10
below-threshold (A-approach, encoder) cells clears the AUROC
${\geq}0.85$ trip-wire. DAE-$10^5$ projected to $\cos{\geq}0.95$;
V1/V3/V4 are single-seed, and the V2 (manifold-aligned default) column
reproduces the 5-seed DAE-C4 means of Table~\ref{tab:threshold}.}
\label{tab:variants}
\footnotesize
\setlength{\tabcolsep}{3pt}
\begin{tabular}{l c c c c}
\toprule
embedder & V1 add & V2 mfd & V3 mult & V4 anch \\
\midrule
BGE-large-1024      & 0.54 & 0.606 & 0.50$^*$ & 0.52 \\
e5-large-1024       & 0.53 & 0.594 & 0.50$^*$ & 0.49 \\
T5-768              & 0.62 & 0.706 & ---     & 0.58 \\
gte-large-1024      & 0.59 & 0.653 & ---     & 0.57 \\
\bottomrule
\end{tabular}\\
{\scriptsize $^*$ V3 multiplicative is structurally non-detectable
(verifier baseline AUROC $\approx 0.50$); reported for completeness.}
\end{table}

\paragraph*{Adversarial-carrier side-finding}
On above-threshold encoders the adversarial-carrier approaches (A1
with single-DAE and 5-DAE-ensemble selectors) defend $3$ of $6$
\emph{(approach, encoder)} cells: both cells on MiniLM
($\effrank/d{=}0.440$) and the ensemble selector on CLIP
($\effrank/d{=}0.197$); they fail both cells on MPNet
($\effrank/d{=}0.256$) and the single-DAE selector on CLIP. The
pattern is consistent with the manifold-aligned variant's design:
at low spectral slack, DAE-survival-based carrier selection does
not reliably generalise across DAE seeds, while alignment with the
encoder's intrinsic top eigenvectors does not depend on a specific
DAE realisation and defends uniformly above threshold. This refines
but does not contradict the structural-limit reading: above-threshold
encoders admit a defense, but the choice of perturbative variant
matters more as the encoder approaches the threshold from above.

\paragraph*{Rate-limiting cannot rescue below-threshold encoders}
At $n_{\mathrm{train}}\!=\!10$ KPA pairs the DAE has already learned
enough of a low-rank manifold to drive AUROC to chance. We therefore
restrict the universal-defense claim to encoders with
$\effrank(\Sigma)/d \geq 0.19$. A 5-seed CI sweep across
$\delta_{\mathrm{C4}}\in\{0.85,0.90,0.95,0.99\}$
(App.~\mbox{S-II}) shows tight intervals and consistent
behaviour, with MPNet at $\delta_{\mathrm{C4}}{=}0.99$ the only
borderline cell (AUROC $0.843$).

\paragraph*{Toward a formal threshold prediction}
The threshold $\effrank/d \!\geq\! 0.19$ is empirically uniform but
not derived. A random-$Q$ Wiener-style analysis predicts
$\beta_{\mathrm{rQ}} \!\approx\! \sqrt{r/d}$, giving the closed-form
threshold:
\begin{equation}\label{eq:thresh-pred}
\frac{r}{d} \;\geq\; \frac{z_{\mathrm{AUROC}}^2}{\varepsilon^2 \cdot w \cdot d}.
\end{equation}
At $(\varepsilon, w, z_{0.99}) {=} (0.07, 16, 2.33)$:
this predicts $r/d {\geq} 0.180$ at $d{=}384$ (matching empirical
$\sim$$0.19$ to within $0.01$); $r/d {\geq} 0.090$ at $d{=}768$ (under-predicts;
empirical is $\sim$$0.19$). The empirical $d$-uniformity is not
captured; the gap is attributable to manifold-curvature corrections
beyond the scope of this paper. A curvature-corrected derivation using
Riemannian geometry of the data manifold remains open.

Future-work paradigm shifts (encoder cooperation, robust-hash hybrid,
adversarial-trained meta-learning) are discussed in \S\ref{sec:disc}.

\subsection{Baselines and deployment-scale evidence}\label{sec:baselines}
We run the eight baselines listed in \S\ref{sec:setup} through the
attack suite (5 seeds, $10^5$ KPA pairs, bootstrap CIs; faithful
centroid-direction \texttt{EmbMarker}~\cite{peng2023embmarker} and
\texttt{WARDEN}~\cite{shetty2024warden} ports in App.~\mbox{S-II}).

\begin{table}[t]
\centering
\caption{Baseline AUROCs on the load-bearing attacks (MS~MARCO MiniLM-L6,
5 seeds). \texttt{EmbMarker\_port} matches \texttt{AddDir}, and
\texttt{WARDEN\_port} matches \texttt{WET-approx}, cell-by-cell ---
empirically confirming the \S\ref{sec:related-emb} reduction.}
\label{tab:baselines}
\footnotesize
\setlength{\tabcolsep}{4pt}
\begin{tabular}{l c c c c c}
\toprule
construction & int4 & PCA$_{d/2}$ & noise & DAE-$10^5$ & dir-or. \\
\midrule
\texttt{AddDir}              & .835 & .706 & .761 & .499 & .502 \\
\texttt{WET-approx}          & .827 & .698 & .756 & .503 & .505 \\
\texttt{SEAL-only}           & .622 & .519 & .503 & .550 & .712 \\
\texttt{ESpeW-port}          & .617 & .515 & .500 & .553 & .707 \\
\texttt{Tree-Ring}           & .649 & .582 & .609 & .494 & .493 \\
\texttt{HaloMark-CA}         & \textbf{.9999} & .973 & .999 & .504 & .505 \\
\textbf{\halomark{}}         & \textbf{.9999} & \textbf{.989} & \textbf{.987} & \textbf{.997} & \textbf{.9999} \\
\bottomrule
\end{tabular}
\end{table}

Three patterns: content-agnostic schemes (AddDir, WET-approx, Tree-Ring)
fail at direction-oracle (chance). Content-dependent without
per-vector freshness (SEAL-only, ESpeW-port) survive dir-oracle but
fail at DAE-$10^5$. \texttt{HaloMark-CA} collapses on both. Only the
full \halomark{} composition holds against both at AUROC $\geq 0.99$.

\paragraph*{Published verifiers under C4}
Implementing each baseline's published statistical verification under
C4, EmbMarker fails on both DAE and direction-oracle ($p>0.45$),
WARDEN's max-of-$m$ test survives DAE but collapses under
direction-oracle, while HaloMark holds at AUROC $\geq 0.995$.

\paragraph*{Multi-corpus validation}
On four BEIR/MTEB corpora (NFCorpus, FiQA, ArguAna, CodeSearchNet)
with 3 seeds and $N=1{,}000$ test vectors, direction-oracle AUROC is
$1.0000$ on every cell and DAE-$10^5$ AUROC stays between $0.91$ and
$0.999$ across the eight (corpus, calibration) cells (App.~\mbox{S-II}).

\paragraph*{Multi-tenant isolation at 64 keys}
With 64 PRF-independent producer keys, a joint-trained $k$-collude
DAE ($k\in\{2,4,8,16,32\}$, $800{,}000$ joint pairs at $k{=}32$)
transfers to a fresh target key at AUROC $0.996$, $\beta=0.707$ ---
indistinguishable from single-key V\_blind (App.~\mbox{S-II}
for the $k{=}64$ reduction argument). BDOT key independence holds
at production scale.

\paragraph*{Self-calibration (Mode B; defined \S\ref{sec:c2pa})}
Three producers (30k vectors each) yield baseline AUROC $0.9999$
with V\_blind $\beta=0.71$; a cross-producer DAE achieves no
transfer (AUROC $0.997$); an adversarial $\Sigma$ that zeroes the
bottom $50\%$ of eigenvalues still gives AUROC $1.0000$, publicly
auditable from the manifest spectrum.

\paragraph*{Multi-bit payload decoding}
Three coding schemes (3 seeds, MiniLM) under DAE-$10^5$: 4-bit-rep4
decodes at $89.4\%$, Hamming(7,4) at $67.1\%$, raw 16-bit at $4.6\%$
(App.~\mbox{S-III}).

\paragraph*{Paraphrase and composed defenses}
We paraphrase 200 MS-MARCO passages with
\texttt{chatgpt\_paraphraser\_on\_T5\_base} and re-embed with MiniLM:
multi-attempt paraphrase (4 per passage, attacker-best variant)
places $44.5\%$ inside C4 at attacker-best AUROC $0.395$. Composing
HaloMark with a KGW-style text-level watermark raises detection from
$51.5\%$ (KGW alone) to $53\%$ on attacker-best paraphrases;
Mode C (\S\ref{sec:mode-c}) rejects $92\%$ of multi-attempt
paraphrases at the unified cross-encoder threshold
$\tau_{\mathrm{C}}{=}0.70$.

\paragraph*{C2PA interop and Qdrant}
Signed assets validate against three \texttt{c2pa-rs} bindings
(\texttt{c2pa-python}~v0.32.6, \texttt{c2patool}~v0.26.56,
\texttt{@contentauth/c2pa-node}~v0.5.5); verifier runs as a Qdrant
1.17 admission filter at $284\,\mu$s/vec ($3{,}527$~vec/s, single-core).
On a 1k-vector smoke test, \texttt{Verify} rejects $87/100$ vectors
under $\sigma{=}0.20$ Gaussian tampering (outside C4) and $97/100$
under wrong-key probing; sidecar tampering is caught upstream by
the C2PA signature at $1{-}2^{-128}$.

\section{Discussion and Limitations}\label{sec:disc}
\label{sec:dpdef}
\paragraph*{Below-threshold encoders}
At $\effrank(\Sigma)/d < 0.19$, all six structural approaches we
tested fail inside C4 (\S\ref{sec:threshold}); at $n_{\mathrm{train}}{=}10$
KPA pairs the DAE already removes the watermark --- a structural
limit of perturbative watermarking. Mode C closes the gap for
deployments with source text; pure-vector EaaS remains open
(candidates: encoder cooperation~\cite{fernandez2023stablesig} or
robust-hash hybrids; KGW composition of \S\ref{sec:baselines} adds
only ${\sim}1.5$pp).

\paragraph*{Trust assumptions outside scope}
Producer-key compromise; calibration drift (refresh $\Sigmacalib$
when shift ratio $>1.3$); malicious producer with a valid C2PA key
(PKI, not watermarking); ECDSA/HKDF-SHA-256 pre-quantum migration
inherited from C2PA; batch-level cross-vector correlations under
shared \texttt{calibration\_corpus\_id}.

\paragraph*{Verifier-oracle leakage}
Decision feedback is inherent to any threshold detector
(\S\ref{sec:rob}, Table~S2): with continuous-score
access PGD drives AUROC to $0$ at $q\!=\!64$, and with 1-bit access
HopSkipJump forces operational rejection on $98.6\%$ of vectors at
$q\!=\!20$ even while AUROC stays at $0.93$. This is most acute in the
Qdrant admission-filter deployment (\S\ref{sec:baselines}), where an
adversary controlling inserts and queries harvests accept/reject
decisions directly, so oracle access is an operational reality there,
not a hypothetical. Our defenses are correspondingly \emph{operational,
not cryptographic}: a hard rate limit of $q\!\leq\!100$ per (key,
source) per day, operating FPR $\alpha\!=\!10^{-4}$ (30-day forgery
expectation below $0.3$), advisory-only score release, and an optional
Laplace-noised decision that blunts the oracle at a legitimate-TPR cost
($0.99\!\to\!0.73$ at noise scale $b\!=\!0.05$); full analysis in
App.~\mbox{S-V}. By contrast the within-C4 surrogate-PGD attacker
\emph{without} oracle access fails on every encoder
(AUROC $\geq 0.9998$), so for oracle-exposed deployments this leakage,
not in-budget removal, is the dominant residual risk.

\paragraph*{Theory gaps}
Theorem~\ref{thm:wiener-floor}'s Cauchy--Schwarz bound
($\beta \geq 0.136$ on MiniLM) is loose vs.\ the empirical floor
$0.851$. The closed-form random-$Q$ threshold matches at $d{=}384$
within $0.01$ (\S\ref{sec:threshold}) but under-predicts at
$d \geq 768$; a curvature-corrected derivation capturing the
empirical $d$-uniformity is open.

\section{Conclusion}\label{sec:conclusion}
Perturbative embedding watermarks separate at an empirical threshold
$\effrank(\Sigma)/d \approx 0.19$: above this line the
published-commit construction defends and Mode C closes paraphrase
when source text is co-available; below it every variant we tested
fails. The deployed cost is $24$ bytes of sidecar per vector at
${\sim}284\,\mu$s verifier latency, validated end-to-end against
three language bindings of the C2PA SDK. Open: below-threshold
encoders, pure-vector paraphrase, and a curvature-corrected
threshold derivation that captures the empirical $d$-uniformity.

\bibliographystyle{IEEEtran}
\bibliography{references}


\end{document}